\documentclass{article} % For LaTeX2e
\usepackage{iclr2027_conference,times}
\usepackage{float}
\usepackage{amsmath,amsfonts,bm}

\def\eqref#1{equation~\ref{#1}}
\def\1{\bm{1}}

\DeclareMathAlphabet{\mathsfit}{\encodingdefault}{\sfdefault}{m}{sl}
\SetMathAlphabet{\mathsfit}{bold}{\encodingdefault}{\sfdefault}{bx}{n}

\usepackage{amsmath,amssymb,amsthm}
\newtheorem{proposition}{Proposition}[section]
\newtheorem{corollary}[proposition]{Corollary}
\newtheorem{theorem}[proposition]{Theorem}
\newtheorem{remark}[proposition]{Remark}
\usepackage{wrapfig}
\usepackage{hyperref}
\usepackage{url}
\usepackage{graphicx}
\usepackage{array}
\usepackage{tabularx}
\usepackage{enumitem}
\usepackage{booktabs}
\usepackage[table]{xcolor}
\usepackage{makecell}
\definecolor{categorygray}{HTML}{EEF1F4}
\definecolor{actsblue}{HTML}{DCEBFF}
\usepackage{arydshln}
\usepackage{wrapfig}
\usepackage{booktabs}
\usepackage{needspace}
\usepackage{algorithm}
\usepackage{algpseudocode}
\usepackage{float}
\usepackage{dblfloatfix}

\setlist[itemize]{
    leftmargin=1.3em,
    labelsep=0.45em,
    itemsep=0pt,
    topsep=2pt,
    parsep=0pt,
    partopsep=0pt
}

\title{Learning Who Talks to Whom: Role-Aware Communication Topology Scheduling for Efficient LLM Multi-Agent Collaboration}

\author{%
\textbf{Haibin Li}$^{*,1}$, \textbf{Wenya Yu}$^{*,2}$, \textbf{Xuefeng Jiang}$^{\dagger,3}$, \textbf{Jia Li}$^{3}$, \textbf{Xiao Han}$^{1}$, \textbf{Shuai Su}$^{2}$, \textbf{Wei Wei}$^{\dagger,1}$\\[-2pt]{\normalfont\small $^{1}$Beihang University}\\{\normalfont\small $^{2}$Nanyang Technological University}\\{\normalfont\small $^{3}$University of Chinese Academy of Sciences}
}

\iclrfinalcopy % suppresses ICLR review line numbers for the arXiv version.
\begin{document}

\maketitle\begingroup
\renewcommand{\thefootnote}{}
\footnotetext{\textsuperscript{*} Equal contribution; \textsuperscript{\ensuremath{\dagger}} Corresponding authors.}
\endgroup

\begin{abstract}
% xuefeng new version
Multi-agent systems driven by large language models (LLMs) rely on structured communication topologies to facilitate information exchange and collaborative reasoning. 
However, effectively scheduling these topologies to dynamically determine \textit{who talks to whom} and \textit{when to communicate} remains a fundamental challenge in balancing collective reasoning performance against exponential token overhead. 
Existing topology scheduling approaches either enforce static communication graphs that cannot adapt to dynamic problem-solving states, or overlook agent role complementarity while failing to explicitly incorporate the synthesized graph structure into team-level decision evaluation. 
To tackle these limitations, we propose \texttt{ACTS}, a role-aware adaptive communication topology scheduling framework for efficient LLM multi-agent collaboration. 
Orchestrated within a Centralized Training with Decentralized Execution (CTDE) paradigm, \texttt{ACTS} introduces a novel dual-attention mechanism. Specifically, a \textit{Role-conditioned Relational Attention} module dynamically selects variable-cardinality interaction partners based on agent roles and evolving execution states, while a \textit{Topological Attention Mixer} explicitly embeds the directed communication topology into team value decomposition for precise credit assignment.
Comprehensive evaluations across 9 challenging benchmarks covering code generation, multi-step reasoning, general knowledge, and competitive mathematics demonstrate that \texttt{ACTS} consistently outperforms state-of-the-art baselines, boosting average accuracy by up to 6.19 percentage points while reducing token consumption by over 15\%. The project website is available at \url{https://acts-mas.github.io/ACTS/}.

\end{abstract}

\section{Introduction}
\label{sec:introduction}

LLM-based multi-agent systems combine heterogeneous roles and complementary
reasoning perspectives to solve complex tasks \citep{llm_debate,autogen}.
Structured communication supports evidence sharing and cross-agent
verification but increases token consumption and inference latency
\citep{yan2025beyondselftalk}. Exchanges that overlook agent contributions
or evolving execution states may introduce redundant information
\citep{liu2024dylan,wang2025agentdropout}. This motivates role-aware adaptive
topology scheduling to determine when and between which agents communication
occurs, balancing task performance against communication cost
\citep{gdesigner,agent_qmix}.

\begin{figure*}[!t]
    \centering
    \includegraphics[
        width=0.88\textwidth,
        height=0.45\textheight,
        keepaspectratio
    ]{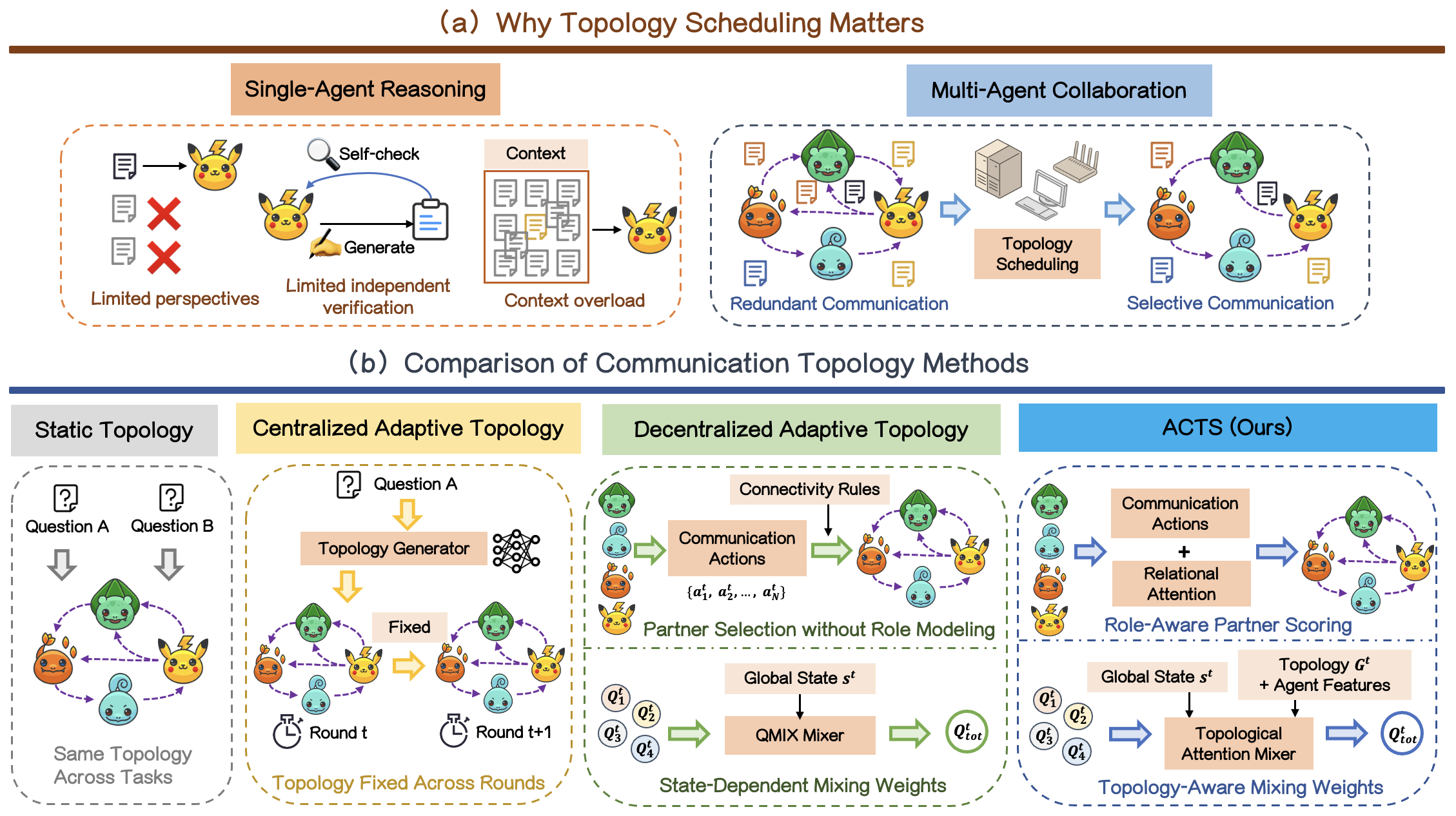}
    \caption{\textbf{Motivation and method comparison.}
(a) Topology scheduling for efficient multi-agent collaboration.
(b) Method comparison highlighting \texttt{ACTS}'s role-aware partner selection and topology-aware value mixing.}
    \label{fig:acts_intro}
\end{figure*}

Existing communication-topology scheduling methods can be classified into
static approaches, task-adaptive approaches, and multi-agent reinforcement learning (MARL)-based approaches. Static methods retain predefined interaction structures across
tasks and communication rounds
\citep{llm_debate,autogen}. Task-adaptive methods generate
task-specific graphs through topology induction or architecture search
\citep{gdesigner,topodim},
but commonly determine the complete topology before execution. Their directed
links consequently remain fixed during interaction and cannot adapt to
evolving local states. MARL-based topology-scheduling methods make communication decisions during
execution. Agent Q-Mix\citep{agent_qmix} is a representative MARL-based topology-scheduling method that
constructs directed topologies from selected communication modes through
predefined rules. Under centralized training with decentralized execution
(CTDE), a QMIX \citep{qmix} monotonic mixing network combines individual utilities using
global information during training, and agents select actions from local
observations during execution.
However, its partner selection is not explicitly conditioned on agent roles
or evolving execution states, and its mixer does not encode the realized
directed incoming and outgoing neighborhoods of individual agents. These
limitations constrain adaptive partner selection and topology-specific credit
assignment.

To address these limitations, this work proposes \texttt{ACTS}, a role-aware
framework for \textbf{A}daptive \textbf{C}ommunication
\textbf{T}opology \textbf{S}cheduling under centralized training with
decentralized execution (CTDE). Each agent uses a local action-value
function to decide \emph{when} and \emph{how} to communicate.
Role-aware relational attention selects variable-cardinality directed
partner sets from role embeddings and evolving local states, determining
who communicates with whom at each round. A topological attention mixer
incorporates the current directed topology into joint action-value
factorization through utility weights conditioned on the global state
and incoming and outgoing neighborhoods, enabling topology-aware temporal-difference (TD)
credit assignment. Under a fixed pre-decision context, the monotonic
mixer preserves Individual-Global-Max (IGM) consistency \citep{son2019qtran},
supporting decentralized greedy execution.

The main contributions are summarized as follows.

\begin{itemize}

\item \textbf{Role-aware communication topology scheduling.}
\texttt{ACTS} combines communication-mode selection with role-aware
relational attention to learn communication timing and variable-cardinality
directed partner sets from agent roles and evolving local states.
The scheduler jointly optimizes task performance and token efficiency.

\item \textbf{Topology-conditioned value factorization.}
A topological attention mixer learns nonnegative utility weights from
the global state and directed incoming and outgoing neighborhoods for
topology-aware TD credit assignment. The analysis relates graph-supported
interactions to local joint-value sensitivities and establishes
decentralized greedy consistency under a fixed pre-decision context.

\item \textbf{Accuracy and efficiency improvements.}
Evaluation covers nine benchmarks spanning four domains and two frozen
backbone LLMs. Relative to Agent Q-Mix, \texttt{ACTS} improves macro-average
accuracy by $3.18$ and $6.19$ percentage points and reduces average token
consumption by $15.42\%$ and $13.84\%$ on GPT-OSS-120B and
DeepSeek-V4-Flash, respectively. Ablations and communication-pattern
analyses further support the benefits of role embeddings, adaptive
routing, and topological mixing.

\end{itemize}

\section{Related Work}
\label{sec:related_work}

\subsection{Communication and Topology Control}
LLM-based multi-agent systems initially relied on fixed role-based workflows,
structured debates, and predefined interaction templates, including CAMEL,
AgentVerse, MetaGPT, ChatDev, AutoGen and LLM-Debate
\citep{li2023camel,chen2023agentverse,hong2023metagpt,qian2024chatdev,
autogen,llm_debate}. Subsequent methods adapt interaction
structures through graph induction, multi-agent architecture search, or
agent-participation control, such as G-Designer, GPTSwarm, MaAS, Assemble
Your Crew, AgentDropout, GTD, and TopoDIM
\citep{gdesigner,zhuge2024gptswarm,maas,li2025assemble,
wang2025agentdropout,gtd,topodim}.
These approaches generally do not jointly learn round-wise communication
modes, directed partner selection, and topology-conditioned team-value
evaluation. Agent Q-Mix introduces reinforcement-learning-based
round-wise communication scheduling with a QMIX mixer
\citep{agent_qmix}, but does not explicitly model agent roles or condition
its value factorization on individual directed communication neighborhoods.
\texttt{ACTS} addresses these limitations through role-aware
relational attention and topological attention for directed partner
selection and topology-aware value factorization.

% Existing LLM multi-agent systems predefine communication structures such as
% workflows, structured debates, and fixed interaction templates. CAMEL
% \citep{li2023camel}, AgentVerse \citep{chen2023agentverse}, MetaGPT
% \citep{hong2023metagpt}, ChatDev \citep{qian2024chatdev}, AutoGen
% \citep{autogen}, LLM-Debate \citep{du2023}, and DyLAN
% \citep{liu2024dylan} provide structured role-based collaboration but keep
% communication topologies fixed during execution. Other work instead adapts
% interaction structure to the task. G-Designer \citep{zhang2024gdesigner} and
% GPTSwarm \citep{zhuge2024gptswarm} induce interaction graphs, MaAS
% \citep{zhang2025maas} and Assemble Your Crew \citep{li2025assemble} optimize
% multi-agent architectures, AgentDropout \citep{wang2025agentdropout} modulates
% agent participation, and GTD \citep{jiang2025gtd} and TopoDIM
% \citep{sun2025topodim} generate complete topologies before execution. Agent
% Q-Mix \citep{agent_qmix} replaces handcrafted topology rules with reinforcement
% learning to select communication primitives and assemble round-wise
% topologies, but its QMIX mixer relies only on graph-level statistics and does
% not encode individual communication neighborhoods to determine mixing weights.
% \texttt{ACTS} addresses this limitation by using relational attention for role-aware
% partner selection and learning mixing weights from directed neighborhood
% representations through topological attention.

\subsection{Cooperative Multi-Agent Reinforcement Learning}

Cooperative MARL provides the optimization framework for decentralized
decision-making under a shared team objective. Centralized training with
decentralized execution (CTDE) uses global information during training while
restricting execution to local observations
\citep{oliehoek2016}. Value-factorization methods, including VDN, QMIX, QPLEX,
and Qatten, represent the team value through individual utilities with
different mixing structures
\citep{sunehag2018,qmix,wang2021qplex,yang2020qatten}. Differentiable
communication methods further study message passing and attention-based
aggregation \citep{sukhbaatar2016,foerster2016}. \texttt{ACTS} builds on this foundation
by learning discrete communication actions, role-aware interaction
topologies, and topology-dependent utility weights within a CTDE framework.

\section{ACTS: Dual-Attention Communication Topology Scheduling}
\label{sec:acts}
\texttt{ACTS} is a CTDE framework for adaptive communication-topology scheduling
in cooperative LLM agents \citep{oliehoek2016}. It combines value-based
mode selection with role-aware relational attention for directed partner
selection, and incorporates the resulting topology into centralized value
factorization through topological attention. Figure~\ref{fig:acts_pipeline}
illustrates the overall framework.

\begin{figure}[t]
    \centering
    \includegraphics[width=\linewidth]{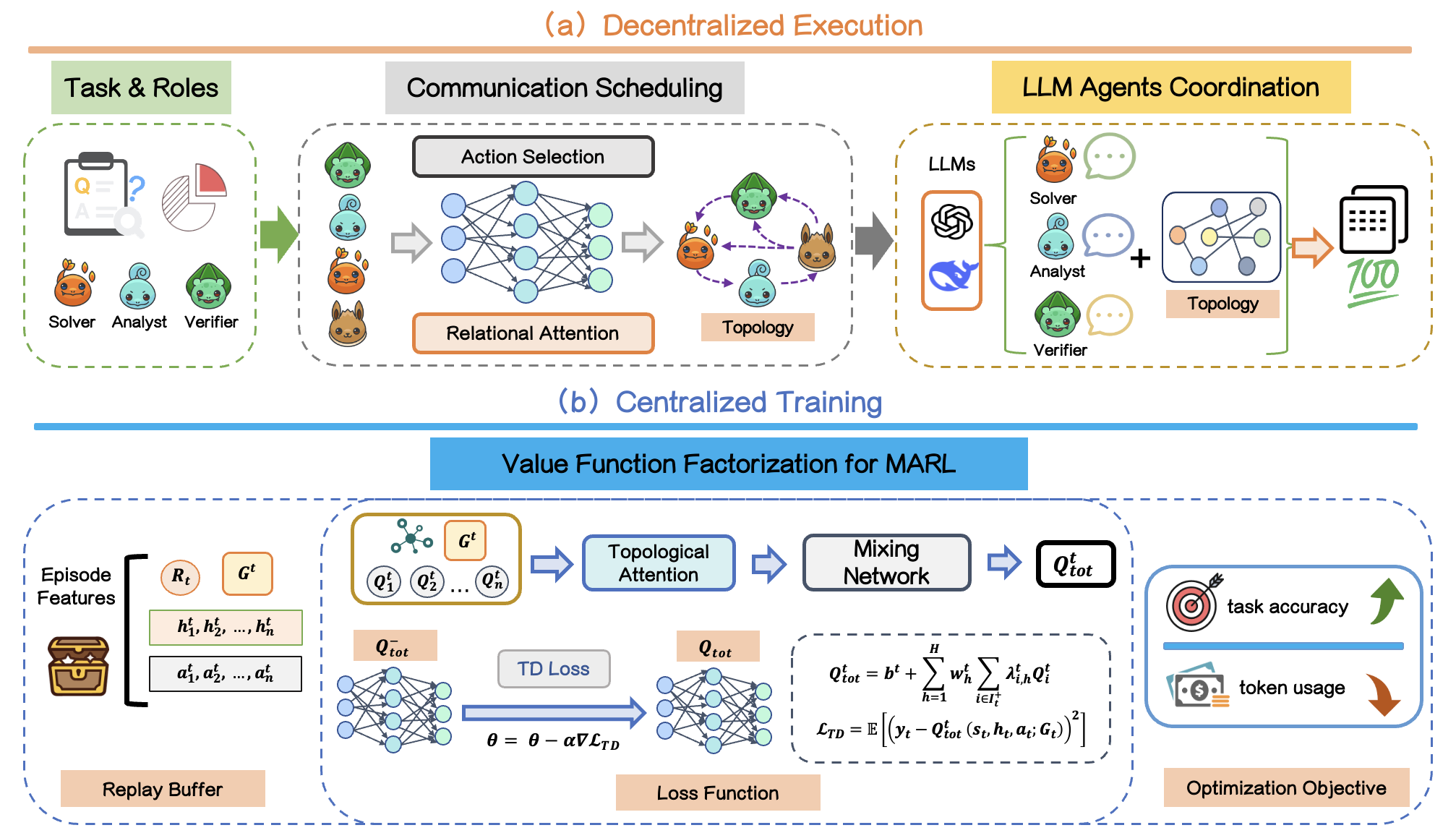}
    \caption{\textbf{Overview of \texttt{ACTS}.}
    \textbf{(a) Decentralized execution.}
    Local action-value functions select communication modes, while
    role-aware relational attention selects directed partners and
    induces the adaptive communication topology.
    \textbf{(b) Centralized training.}
    Episode transitions are stored in a replay buffer. Topological attention and
the mixing network combine the realized topology $G^t$ with local utilities
to estimate $Q_{\mathrm{tot}}^t$, which is optimized by TD learning under a
task-accuracy and token-cost objective.}
    \label{fig:acts_pipeline}
\end{figure}

\subsection{Problem Formulation}
\label{sec:problem_setup}

The proposed \texttt{ACTS} framework models adaptive communication topology
scheduling as a cooperative multi-agent decision problem under CTDE \citep{qmix}. Let
$\mathcal I=\{1,\ldots,N\}$ denote the set of cooperative LLM agents,
where agent $i$ is assigned a role $\rho_i$. At round $t$, agent $i$
receives a local observation $o_i^t$ and maintains a recurrent state
$h_i^t$ encoding its interaction history. The global state $s^t$ is
available only during centralized training.

Communication is represented by a directed graph
$G^t=(\mathcal I,\mathcal E^t)$, where $(i,j)\in\mathcal E^t$ denotes
a communication link from agent $i$ to agent $j$. We initialize
$G^1$ as the complete directed graph without self-loops and set
$\mathcal I_{\mathrm{comm}}^1=\mathcal I$. At round $t$, the agents
jointly select
$\boldsymbol a^t=(a_1^t,\ldots,a_N^t)$ with
$a_i^t\in\mathcal A_{\mathrm{comm}}$. The communication-eligible set
for the next round is
\begin{equation}
\mathcal I_{\mathrm{comm}}^{t+1}
=
\left\{
i\in\mathcal I
\mid
a_i^t\notin
\{a^{\mathrm{off}},a^{\mathrm{solo}}\}
\right\}.
\label{eq:acts_communication_eligible_set}
\end{equation}
The selected modes and directional attention scores then determine the sender and
receiver sets
$\mathcal S_i^{t+1}$ and $\mathcal R_i^{t+1}$, which define the directed
topology $G^{t+1}$.

For subsequent analysis, $\operatorname{IG}$ \citep{cover_thomas} denotes the
topology-induced information gain, which measures the
information-acquisition effectiveness of the topology scheduler. Let
$\pi$ denote the topology-scheduling policy and
$c_{\mathrm{tok}}^t(\tau)$ the token consumption at round $t$ along
trajectory $\tau$. The policy is optimized to balance task performance
and token efficiency:
\begin{equation}
\pi^\star
=
\operatorname*{arg\,max}_{\pi}
\mathbb E_{\tau\sim\pi}
\left[
w_{\mathrm{task}}R_{\mathrm{task}}(\tau)
-
w_{\mathrm{tok}}
\min\!\left(
\frac{\sum_{t=1}^{T}c_{\mathrm{tok}}^t(\tau)}{B},1
\right)
\right],
\label{eq:acts_scheduling_objective}
\end{equation}
where $R_{\mathrm{task}}(\tau)$ denotes the final task score, $B$ is a
token-cost normalization scale, and $w_{\mathrm{task}}$ and
$w_{\mathrm{tok}}$ control the trade-off between task performance and
token efficiency.

\subsection{Relational Attention for Communication Topology Learning}
\label{sec:topology_learning}
Existing topology schedulers rely on fixed communication patterns and do
not explicitly model role-dependent partner relevance
\citep{agent_qmix}. \texttt{ACTS} combines communication-mode selection with relational attention over
recurrent representations that encode agent roles and evolving local
contexts, thereby constructing adaptive directed neighborhoods with a
variable number of partners.

At round $t$, the GNN aggregates each agent's local observation and role
information according to the current topology $G^t$, while a GRU updates
the recurrent state using the previous hidden state:
\begin{equation}
h_i^t
=
\operatorname{GRU}_{\theta_h}\!\left(
\operatorname{GNN}_{\theta_g}\!\left(
o_i^t,e_{\rho}(\rho_i),G^t
\right),
h_i^{t-1}
\right),
\label{eq:acts_state_encoding}
\end{equation}
where $e_{\rho}(\rho_i)$ denotes the embedding of agent $i$'s role.
The resulting states are processed by a cross-attention module, which
computes the directed relevance between each ordered pair of agents:
\begin{equation}
\alpha_{ij}^t
=
\operatorname{softmax}_{j\neq i}
\left(
\frac{
\left(W_{Q,\mathrm{rel}}h_i^t\right)^\top
\left(W_{K,\mathrm{rel}}h_j^t\right)
}{
\sqrt{d_{\mathrm{rel}}}
}
\right).
\label{eq:acts_relational_attention}
\end{equation}

The ordered query--key construction produces directional attention
weights, with $\alpha_{ij}^t\neq\alpha_{ji}^t$ in general. The cross-attention layer refines each recurrent state by aggregating
value projections from other agents with a residual connection, producing
the features used for local mode selection:
\begin{equation}
\xi_i^t
=
h_i^t
+
\sum_{j\in\mathcal I\setminus\{i\}}
\alpha_{ij}^t W_{V,\mathrm{rel}}h_j^t,
\qquad
a_i^t
=
\operatorname*{arg\,max}_{a\in\mathcal A_{\mathrm{comm}}}
Q(\xi_i^t,a).
\label{eq:acts_cross_attention_selection}
\end{equation}

The communication mode specifies the direction and scope of interaction,
while relational attention selects the corresponding partners. For the
eligible candidate set $\mathcal J_i^{t+1}$, the receiver and sender sets
are
\begin{equation}
\begin{aligned}
\mathcal S_i^{t+1}
&=
\begin{cases}
\mathcal J_i^{t+1}, & a_i^t=a^{\mathrm{broadcast}},\\
\{j\in\mathcal J_i^{t+1}\mid
\alpha_{ij}^t>\bar{\alpha}_i^{\mathrm{out},t}\},
& a_i^t=a^{\mathrm{attn\text{-}broadcast}},\\
\varnothing, & \text{otherwise},
\end{cases}
\\[-1mm]
\mathcal R_i^{t+1}
&=
\begin{cases}
\mathcal J_i^{t+1}, & a_i^t=a^{\mathrm{receive}},\\
\{j\in\mathcal J_i^{t+1}\mid
\alpha_{ji}^t>\bar{\alpha}_i^{\mathrm{in},t}\},
& a_i^t=a^{\mathrm{attn\text{-}receive}},\\
\varnothing, & \text{otherwise}.
\end{cases}
\end{aligned}
\label{eq:acts_partner_selection}
\end{equation}

The resulting directed topology is
\begin{equation}
\mathcal E^{t+1}
=
\left\{
(i,j)\in\mathcal I\times\mathcal I
\;\middle|\;
i\neq j,\;
j\in\mathcal S_i^{t+1}
\ \lor\
i\in\mathcal R_j^{t+1}
\right\},
\qquad
G^{t+1}
=
\left(\mathcal I,\mathcal E^{t+1}\right).
\label{eq:acts_induced_topology}
\end{equation}

By incorporating role embeddings and topology-aware recurrent states,
\texttt{ACTS} conditions both action evaluation and partner selection on the
current agent context. This decoupled design enables directed,
variable-cardinality topologies that adapt across rounds. The complete
construction is given in Appendix~\ref{app:directed_topology_construction}.

\begin{figure}[t]
    \centering
    \includegraphics[width=\linewidth]{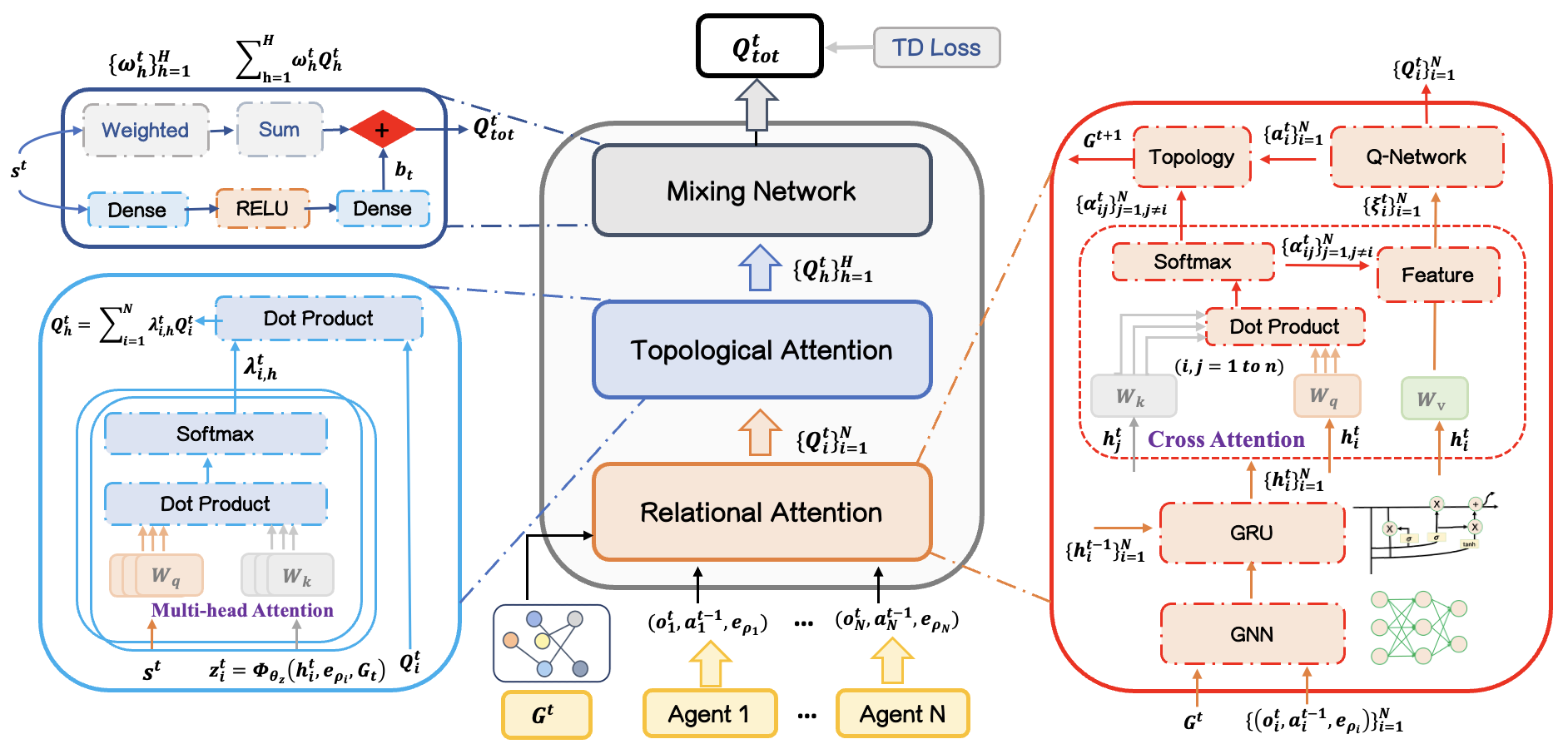}
    \caption{\textbf{Architecture of \texttt{ACTS}.}
Role-aware relational attention guides directed partner selection.
Topological attention weights local utilities according to directed
neighborhood representations, and the mixing network constructs
$Q_{\mathrm{tot}}^t$. These modules are jointly optimized by minimizing
the TD loss within a value function factorization MARL framework under CTDE.}
    \label{fig:acts_architecture}
\end{figure}

\subsection{Topological Attention for Value Function Factorization}
\label{sec:topological_factorization}
\texttt{ACTS} adopts CTDE value factorization with the Individual-Global-Max
property for decentralized greedy execution \citep{sunehag2018,qmix},
as reviewed in Appendix~\ref{app:marl_value_factorization}.
The QMIX mixer in Agent Q-Mix does not explicitly encode individual
directed communication neighborhoods. For a fixed pre-decision context,
Appendix~\ref{app:local_value_expansion} derives the local approximation
\begin{equation}
Q_{\mathrm{tot}}^t(\boldsymbol a)
\approx
b^t+
\sum_{i\in\mathcal I_{\mathrm{comm}}^t}
\beta_i^t Q_i(\xi_i^t,a_i),
\label{eq:acts_local_value_expansion}
\end{equation}
where $\beta_i^t$ denotes the local sensitivity to agent $i$'s utility
under the global state $s^t$. Under the graph interaction assumptions,
the derivation relates this sensitivity to interactions within the incoming
neighborhood $\mathcal N_i^{\mathrm{in},t}$ and outgoing neighborhood
$\mathcal N_i^{\mathrm{out},t}$. This motivates \texttt{ACTS} to learn
nonnegative mixing coefficients through topological attention conditioned
on the global state and separate representations of incoming and outgoing
neighborhoods.

For each candidate communication action $a_i$, \texttt{ACTS}
encodes the directed neighborhoods in $G^t$ by
\begin{equation}
\begin{aligned}
m_{i,r}^t
&=
\operatorname{Agg}_{r}
\left(
\left\{
\phi_r\!\left(
h_j^t\mathbin{\Vert}e_{\rho}(\rho_j)
\right)
\,\middle|\,
j\in\mathcal N_i^{r,t}
\right\}
\right),\\
z_i^t
&=
\Phi_{\theta_z}
\left(
h_i^t\mathbin{\Vert}e_{\rho}(\rho_i)
\mathbin{\Vert}
m_{i,\mathrm{in}}^t
\mathbin{\Vert}
m_{i,\mathrm{out}}^t
\right),
\qquad r\in\{\mathrm{in},\mathrm{out}\}.
\end{aligned}
\label{eq:acts_topology_encoder}
\end{equation}

The topology representation $z_i^t$, used only by the centralized mixer,
serves as the key in topological attention, while the global state $s^t$
provides the query. For each head $h$, we first compute the attention
logits and then normalize them over
$\mathcal I_{\mathrm{comm}}^t$:
\begin{equation}
\begin{aligned}
\eta_{i,h}^t
&=
\frac{
\left(W_{Q,\mathrm{top}}^{(h)}s^t\right)^\top
\left(W_{K,\mathrm{top}}^{(h)}z_i^t\right)
}{
\sqrt{d_h}
},
&
\lambda_{i,h}^t
&=
\left[
\operatorname{softmax}
\left(
\left(\eta_{j,h}^t\right)_{j\in\mathcal I_{\mathrm{comm}}^t}
\right)
\right]_i .
\end{aligned}
\label{eq:acts_topological_attention_weights}
\end{equation}
Thus,
$\lambda_{i,h}^t=
\exp(\eta_{i,h}^t)/
\sum_{j\in\mathcal I_{\mathrm{comm}}^t}\exp(\eta_{j,h}^t)$,
and the normalization is performed independently for each head.

With nonnegative head weights
$w_h^t=\left|[f_{\theta_w}(s^t)]_h\right|$ and offset
$b^t=V_{\theta_b}(s^t)$, the centralized mixer is
\begin{equation}
Q_{\mathrm{tot}}^t(\boldsymbol a)
=
b^t+
\sum_{h=1}^{H}
w_h^t
\left(
\sum_{i\in\mathcal I_{\mathrm{comm}}^t}
\lambda_{i,h}^t Q_i^t(a_i)
\right).
\label{eq:acts_topological_factorization}
\end{equation}

Here,
$\widehat{\beta}_i^t=\sum_{h=1}^{H}w_h^t\lambda_{i,h}^t$
is the effective mixing coefficient. Since
$w_h^t\geq0$ and $\lambda_{i,h}^t\geq0$,
$\partial Q_{\mathrm{tot}}^t/\partial Q_i^t(a_i)
=\widehat{\beta}_i^t\geq0$, so the mixer is monotonic in every local action
value. With the pre-decision context held fixed, nonnegative mixing ensures
that independent greedy actions maximize the learned joint value,
as established in Appendix~\ref{app:acts_igm}.
\begin{equation}
\prod_{i\in\mathcal I_{\mathrm{comm}}^t}
\operatorname*{arg\,max}_{a_i\in\mathcal A_i^t}
Q_i(\xi_i^t,a_i)
\subseteq
\operatorname*{arg\,max}_{\boldsymbol a\in
\prod_{i\in\mathcal I_{\mathrm{comm}}^t}\mathcal A_i^t}
Q_{\mathrm{tot}}^t(\boldsymbol a).
\label{eq:acts_igm}
\end{equation}
The inclusion becomes equality when the attention logits are finite
and $\sum_{h=1}^{H}w_h^t>0$.

\subsection{Cooperative MARL Training and Execution}
\label{sec:training_execution}

The training and execution procedures are detailed in
Appendix~\ref{app:acts_pseudocode}. During decentralized execution,
agent $i$ selects
$a_i^t=\operatorname*{arg\,max}_{a\in\mathcal A_{\mathrm{comm}}}
Q_i(\xi_i^t,a)$, and the joint action
$\boldsymbol a^t=(a_1^t,\ldots,a_N^t)$ induces the next-round topology
$G^{t+1}$. During centralized training, the mixer evaluates each
replayed transition conditioned on its pre-decision state and topology,
with the terminal team reward specified by ~\eqref{eq:acts_scheduling_objective}.

For each sampled transition, the target action of each communication-eligible
agent is selected greedily:
\begin{equation}
a_i^{*,t+1}
=
\operatorname*{arg\,max}_{a\in\mathcal A_{\mathrm{comm}}}
Q_i^{-}(\xi_i^{t+1},a),
\qquad
i\in\mathcal I_{\mathrm{comm}}^{t+1}.
\label{eq:acts_target_action}
\end{equation}

The TD target and loss are then defined as
\begin{equation}
\begin{aligned}
y^t
&=
r^t+\gamma Q_{\mathrm{tot}}^{-}
\left(\boldsymbol a^{*,t+1}\right),\\
\mathcal L_{\mathrm{TD}}
&=
\mathbb E_{\mathcal D}
\left[
\left(
y^t-Q_{\mathrm{tot}}^t(\boldsymbol a^t)
\right)^2
\right].
\end{aligned}
\label{eq:acts_td_update}
\end{equation}

The GNN, GRU, relational-attention module, local action-value networks,
topology encoder, and topological mixer are optimized jointly, while the
backbone LLM remains frozen. For terminal transitions, the target is $y^t=r^t$. Mixer coefficients are computed from the fixed
pre-decision context, and the discrete topology transition is excluded from
gradient computation.

\section{Experiments}
\label{sec:experiments}

\subsection{Experimental Setup}
\label{sec:experimental_setup}
We evaluate \texttt{ACTS} on nine benchmarks spanning four domains:
code generation (LiveCodeBench v6 and HumanEval), multiple-choice
reasoning (MMLU-Pro and the first 250 questions of Humanity's Last Exam),
multi-hop question answering (HotpotQA), and competition mathematics
(AIME 2025/2026, HMMT 2025, and Beyond-AIME)
\citep{jain2024livecodebench,chen2021humaneval,wang2024mmlupro,
hle2025,yang2018hotpotqa,maa2025aime,maa2026aime,matharena2025hmmt,
bytedance2025beyondaime}. We use GPT-OSS-120B \citep{openai2025gptoss} and DeepSeek-V4-Flash \citep{deepseek2026v4flash} as
frozen backbone LLMs; each task instantiates three role-conditioned agents
and a final answer-aggregation node. Baselines include direct prompting,
fixed or framework-based coordination (LLM-Debate, AutoGen, LangGraph, and
Agent Framework), task-adaptive topology scheduling (TopoDIM), and
MARL-based topology scheduling (Agent Q-Mix)
\citep{llm_debate,autogen,langchain2024langgraph,topodim,
agent_qmix}. All methods use the same prompts, team size,
communication-round budget, and evaluation protocol. Results are averaged
over three independent random seeds. Detailed benchmark, baseline, and
training configurations are provided in Appendix~\ref{sec:appendix_training}.

% We evaluate \texttt{ACTS} on nine benchmarks spanning four domains: code generation
% (LiveCodeBench v6 \citep{jain2024livecodebench}, HumanEval \citep{chen2021humaneval}),
% multiple-choice reasoning (MMLU-Pro \citep{wang2024mmlupro}, the first 250
% questions of Humanity's Last Exam \citep{hle2025}), multi-hop question answering
% (HotpotQA \citep{yang2018hotpotqa}), and competition mathematics (AIME 2025
% and 2026 \citep{maa2025aime}, HMMT 2025 \citep{matharena2025hmmt},
% Beyond-AIME \citep{bytedance2025beyondaime}). All experiments employ
% GPT-OSS-120B and DeepSeek-V4-Flash as frozen backbone LLMs, with each task
% instantiated as three role-conditioned agents followed by a final answer
% aggregation node. We compare \texttt{ACTS} against direct prompting, the static
% multi-agent coordination method LLM-Debate \citep{du2023}, the adaptive
% topology method TopoDIM \citep{sun2025topodim}, the general-purpose frameworks
% AutoGen \citep{autogen} and LangGraph \citep{langchain2024langgraph},
% an Agent Framework baseline, and the MARL-based approach Agent Q-Mix
% \citep{agent_qmix}. These are the methods for which experimental
% results are currently available. For a fair comparison, all methods share the
% same team size, task prompts, communication-round budget, and evaluation
% procedure. Training hyperparameters and further evaluation details are
% provided in Appendix~\ref{sec:appendix_training}. We report average
% performance over three independent runs with different random seeds.

\begin{table*}[t]
\centering
\caption{Accuracy (\%) across nine benchmarks. Bold marks the best result
in each column; \texttt{ACTS} is highlighted in light blue. Avg.\ is the unweighted
macro-average over the nine benchmarks.}
\label{tab:main_accuracy}

% \footnotesize
% \setlength{\tabcolsep}{2.5pt}
% \renewcommand{\arraystretch}{1.08}
\footnotesize
\setlength{\tabcolsep}{2.2pt}
\renewcommand{\arraystretch}{0.93}

\setlength{\abovetopsep}{2pt}
\setlength{\belowbottomsep}{0pt}
\setlength{\aboverulesep}{1.2pt}
\setlength{\belowrulesep}{1.2pt}

\begin{tabular}{l*{10}{c}}

\toprule

\rowcolor{gray!20}
Method
& \makecell{LiveCode\\Bench}
& \makecell{Human\\Eval}
& HLE
& \makecell{Hotpot\\QA}
& \makecell{MMLU-\\Pro}
& \makecell{AIME\\2025}
& \makecell{AIME\\2026}
& \makecell{Beyond-\\AIME}
& \makecell{HMMT\\2025}
& Avg. \\

\midrule
\rowcolor{white}
\multicolumn{11}{c}{\textit{LLM backbone: GPT-OSS-120B}} \\
\midrule

Direct Prompting
& 88.57 & 89.63 & 42.40 & 20.00 & 70.00
& 60.00 & 86.67 & 16.00 & 46.67 & 57.77 \\

\midrule
\multicolumn{11}{l}{\textit{Static topology}} \\
\midrule

LLM-Debate
& 92.00 & 88.41 & 42.80 & 21.35 & 71.43
& 63.33 & 86.67 & 18.00 & 46.67 & 58.96 \\

AutoGen
& 95.43 & 90.24 & 41.20 & 32.03 & 77.14
& 70.00 & 80.00 & 18.00 & 46.67 & 61.19 \\

LangGraph
& 93.14 & 89.02 & 42.40 & 40.00 & 84.29
& 76.67 & 86.67 & 23.00 & 50.00 & 65.02 \\

Agent Framework
& 93.14 & 90.24 & 42.00 & 32.03 & 84.29
& 73.33 & 76.67 & 36.00 & 40.00 & 63.08 \\

\midrule
\multicolumn{11}{l}{\textit{Centralized adaptive topology}} \\
\midrule

G-Designer
& 94.29 & 93.29 & 42.80 & 34.73 & 70.00
& 73.33 & 83.33 & 28.00 & 43.33 & 62.57 \\

TopoDIM
& 93.71 & 92.68 & 42.00 & 30.68 & 65.71
& 66.67 & 83.33 & 25.00 & 33.33 & 59.23 \\

GTD
& 94.86 & 93.90 & 42.40 & 35.95 & 71.43
& 70.00 & 80.00 & 30.00 & 46.67 & 62.80 \\

MaAS
& 94.86 & 93.29 & 41.60 & 33.38 & 68.57
& 66.67 & 86.67 & 27.00 & 40.00 & 61.34 \\

\midrule
\multicolumn{11}{l}{\textit{Decentralized learned topology}} \\
\midrule

Agent Q-Mix
& 95.43 & \textbf{100.00} & 38.40 & 40.00 & \textbf{100.00}
& 80.00 & 90.00 & 36.00 & 66.67 & 71.83 \\

\rowcolor{actsblue}
\textbf{ACTS}
& \textbf{96.57} & \textbf{100.00} & \textbf{43.20}
& \textbf{46.67} & \textbf{100.00}
& \textbf{83.33} & \textbf{93.33} & \textbf{42.00}
& \textbf{70.00} & \textbf{75.01} \\

\midrule

\multicolumn{11}{c}{\textit{LLM backbone: DeepSeek-V4-Flash}} \\
\midrule

Direct Prompting
& 93.14 & 98.17 & 33.60 & 40.00 & 72.86
& 50.00 & 66.67 & 10.00 & 46.67 & 56.79 \\

\midrule
\multicolumn{11}{l}{\textit{Static topology}} \\
\midrule

LLM-Debate
& 93.14 & 96.95 & 34.00 & 40.00 & 61.43
& 53.33 & 66.67 & 12.00 & 36.67 & 54.91 \\

AutoGen
& \textbf{96.00} & 98.17 & 32.80 & 42.70 & 71.43
& 60.00 & 70.00 & 15.00 & 46.67 & 59.20 \\

LangGraph
& 92.57 & 97.56 & 33.60 & 46.62 & 78.57
& 60.00 & 70.00 & 26.00 & 46.67 & 61.29 \\

Agent Framework
& 89.71 & 98.17 & 33.20 & 44.05 & 75.71
& 66.67 & 76.67 & 20.00 & 43.33 & 60.83 \\

\midrule
\multicolumn{11}{l}{\textit{Centralized adaptive topology}} \\
\midrule

G-Designer
& 93.71 & 97.56 & 33.60 & 46.62 & 88.57
& 63.33 & 73.33 & 22.00 & 43.33 & 62.45 \\

TopoDIM
& 92.57 & 97.56 & 32.80 & 44.05 & 88.57
& 60.00 & 70.00 & 18.00 & 46.67 & 61.14 \\

GTD
& 93.71 & 97.56 & 34.00 & 47.97 & 90.00
& 63.33 & 73.33 & 24.00 & 43.33 & 63.03 \\

MaAS
& 94.29 & 98.17 & 32.80 & 46.62 & 88.57
& 63.33 & 73.33 & 20.00 & 43.33 & 62.27 \\

\midrule
\multicolumn{11}{l}{\textit{Decentralized learned topology}} \\
\midrule

Agent Q-Mix
& 93.14 & \textbf{100.00} & 32.80 & 44.00 & 92.86
& 60.00 & 73.33 & 33.00 & 40.00 & 63.24 \\

\rowcolor{actsblue}
\textbf{ACTS}
& \textbf{96.00} & \textbf{100.00} & \textbf{34.40}
& \textbf{50.67} & \textbf{97.14}
& \textbf{70.00} & \textbf{86.67} & \textbf{40.00}
& \textbf{50.00} & \textbf{69.43} \\

\bottomrule
\end{tabular}

\end{table*}

\subsection{Main Results}
\label{sec:experimental_results}

\textbf{Accuracy gains from selective communication.} Table~\ref{tab:main_accuracy} reports
accuracy across all nine benchmarks. \texttt{ACTS} achieves the highest average accuracy on both
backbone models, $75.01\%$ on GPT-OSS-120B and $69.43\%$ on DeepSeek-V4-Flash, exceeding
Agent Q-Mix by $3.18$ and $6.19$ points in each case, and attains the best or tied-best
result on every benchmark. This advantage is concentrated on tasks where agents hold
complementary information, because relational attention selects communication partners by
semantic relevance rather than by fixed topology or content-agnostic index rules.

\textbf{Pareto-optimal efficiency across IG regimes.} Figure~\ref{fig:pareto} plots accuracy
against token consumption on HotpotQA ($\mathrm{IG}>0$) and HLE ($\mathrm{IG}\approx 0$);
full counts are provided in Table~\ref{tab:token_usage} (Appendix~\ref{sec:appendix_token}).
Static topologies broadcast every round regardless of whether communication is useful;
centralized adaptive topologies adjust the structure only at generation time, not during
execution. Agent Q-Mix learns agent deactivation and primitive-based routing, but discrete
communication primitives provide limited selectivity. \texttt{ACTS} achieves Pareto optimality
across all four settings: it matches or exceeds Agent Q-Mix at lower token cost and
dominates static and centralized families on both accuracy and cost. Only single-agent
prompting uses fewer tokens, at an accuracy cost of approximately $27$ percentage points
on HotpotQA with GPT-OSS-120B. On HLE, where information gain is near zero, \texttt{ACTS} and
single-agent prompting yield comparable performance. By learning whether communication is
warranted and routing through attention affinities, \texttt{ACTS} spends tokens only where expected
information gain justifies the expenditure.

\begin{figure}[htbp]
\centering
\includegraphics[width=1.0\linewidth]{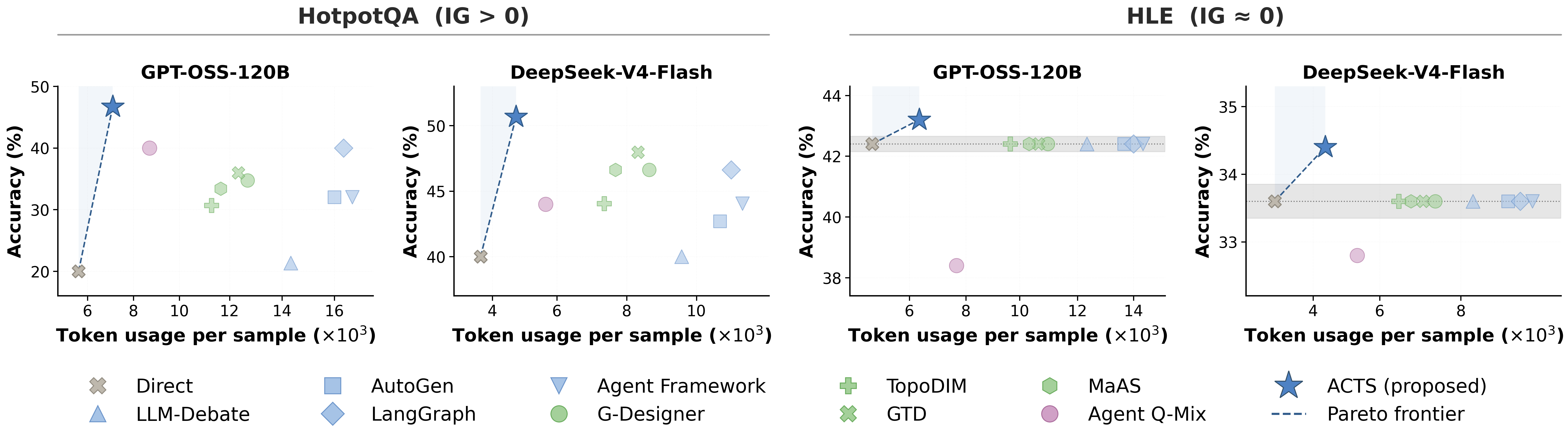}
\caption{Accuracy vs.\ token usage for all methods on two representative
tasks (HotpotQA with $\mathrm{IG}>0$ and HLE with $\mathrm{IG}\approx 0$)
and two backbone models. The dashed line traces the Pareto frontier of
methods not dominated on both accuracy and cost. \texttt{ACTS} lies on the
frontier in all four panels: no other multi-agent method reaches its
accuracy at lower or equal cost, and only single-agent prompting is
cheaper, at a substantial accuracy loss.}
\label{fig:pareto}
\end{figure}

\subsection{Experimental Analysis}
\label{sec:experimental_analysis}

\subsubsection{Information Gain Analysis}
\label{sec:info_gain}

\setlength{\intextsep}{4pt}

\begin{wrapfigure}{r}{0.48\columnwidth}
\vspace{-8pt}
\centering
\includegraphics[width=\linewidth]{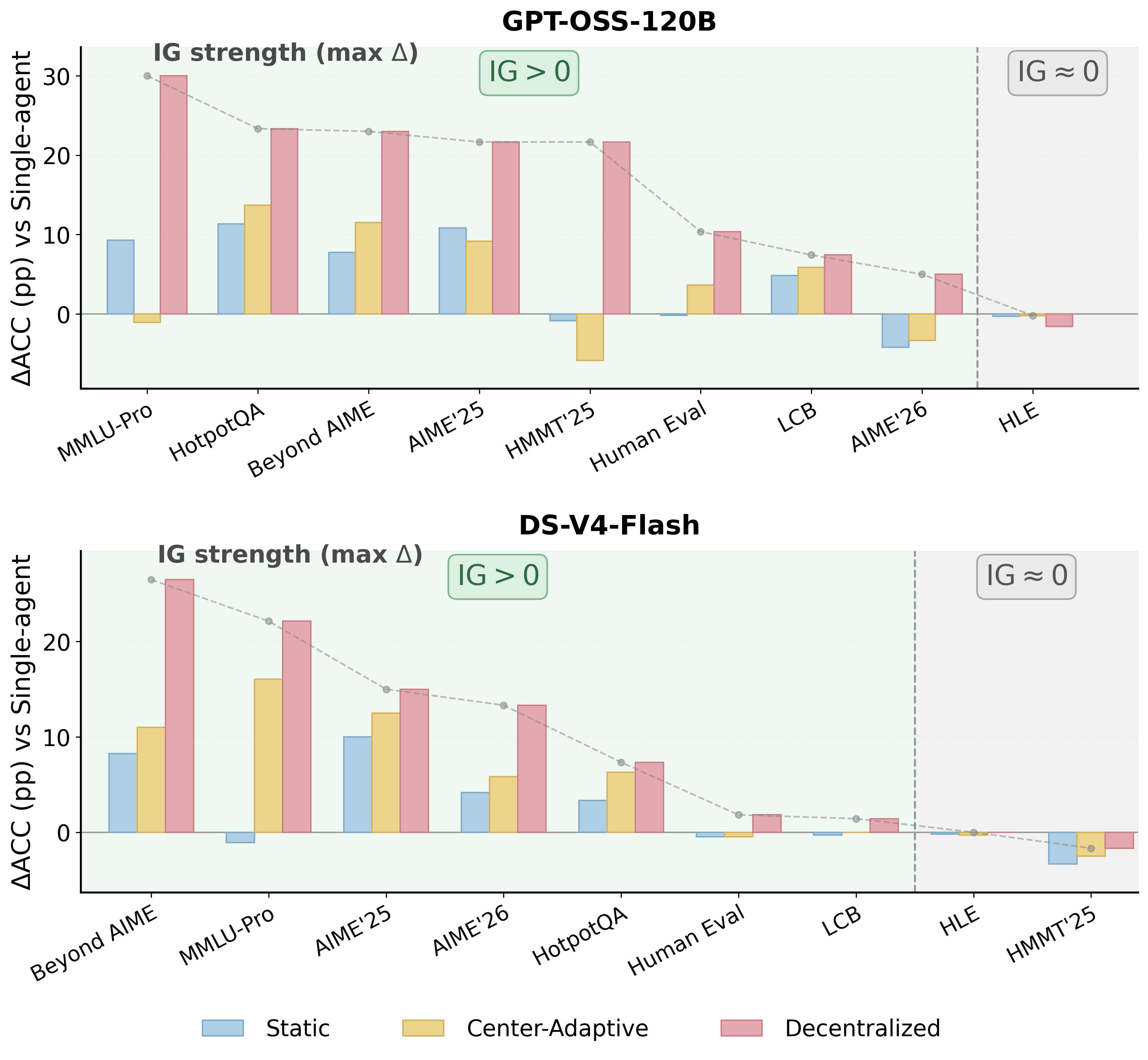}
\caption{Mean $\Delta$ACC of each topology family over single-agent
  prompting, by backbone model. The dashed line partitions benchmarks into
  $\mathrm{IG}>0$ on the left and $\mathrm{IG}\approx 0$
  within $\pm 2/n$ on the right.}
\label{fig:delta_acc}
\vspace{-12pt}
\end{wrapfigure}

Information gain (IG) \citep{cover_thomas} governs whether communication confers an advantage: agents benefit only when they contribute complementary evidence.
  Figure~\ref{fig:delta_acc} compares mean accuracy gains across topology families relative to single-agent prompting. Uplift falls below the
  sampling-noise threshold $x = 2/n$ on HLE and with limited-capacity DeepSeek-V4-Flash on HMMT, yielding null IG. It surpasses this threshold on
  HotpotQAand contest math, where cross-entity evidence and divergent solution paths engender separable signals, yielding positive IG. \texttt{ACTS}
  navigates both regimes: attention-driven partner selection amplifies communication when IG is positive; single-agent inference when null.

\subsubsection{Communication Pattern Analysis}
\label{sec:communication_analysis}

\setlength{\intextsep}{4pt}

\begin{wrapfigure}[15]{r}{0.48\columnwidth}
    \centering
    \includegraphics[width=\linewidth]{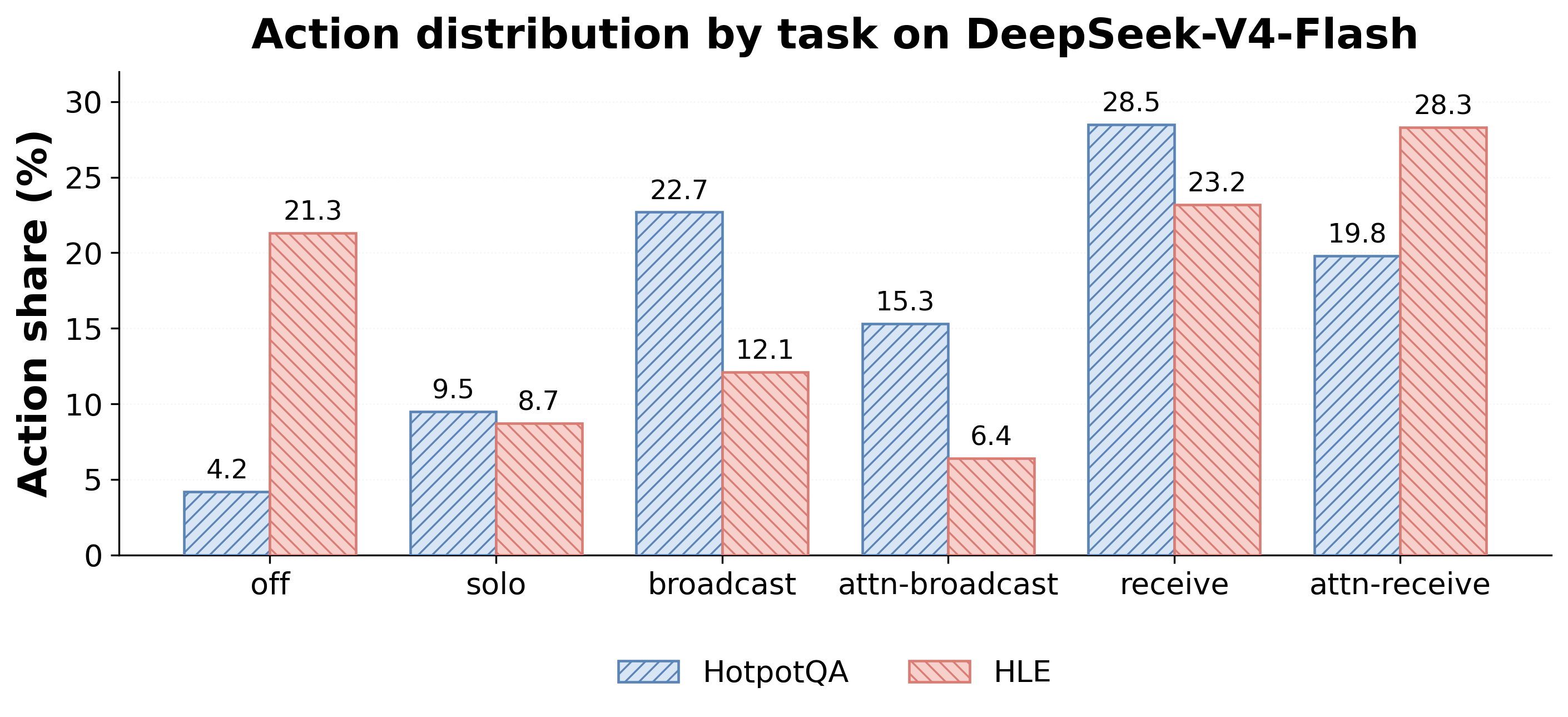}
    \caption{Action distribution by task on DeepSeek-V4-Flash. On HotpotQA,
    the policy sustains agent activity and broadcast communication; on HLE,
    it deactivates a substantial fraction of rounds and relies on pull-based
    retrieval.}
    \label{fig:action_parallel}
\end{wrapfigure}

\texttt{ACTS} achieves its gains through attention-gated selective communication and dynamic deactivation, two complementary mechanisms analyzed on HotpotQA and HLE, two settings that respectively exhibit high and negligible $\mathrm{IG}$. At each round the agent network computes asymmetric directional scores $\alpha_{ij}^t$; \emph{attn-broadcast} retains edge $i\!\to\!j$ when $\alpha_{ij}^t$ exceeds the mean over eligible candidates, \emph{attn-receive} retains $j\!\to\!i$ under the symmetric condition on $\alpha_{ji}^t$, and together these actions construct sparse neighborhoods specific to each round in place of a fixed topology. Trained end to end with the joint accuracy and token objective, the attention module suppresses redundant interactions, as quantified in Table~\ref{tab:token_usage} of Appendix~\ref{sec:appendix_token}. Independently, the \emph{off} action deactivates an agent by suppressing its LLM call and outgoing edges; when a majority of agents become inactive the episode terminates early, forgoing rounds unlikely to alter the answer, whereas agents persist when broader collaboration is warranted, and consequently tasks requiring limited collaboration incur cost approaching that of a single agent. Figure~\ref{fig:action_parallel} contrasts the learned action distributions under DeepSeek-V4-Flash. Under high IG, off stays at $4.2\%$ whereas broadcast reaches $22.7\%$; under IG near zero, off rises to $21.3\%$, broadcast and attn-broadcast drop to $12.1\%$ and $6.4\%$, and attn-receive dominates at $28.3\%$. Appendix~\ref{sec:appendix_case_study} provides a comparison of the two regimes at the trajectory level.

\subsubsection{Ablation Studies}
\label{sec:ablation}

Table~\ref{tab:ablation} isolates the three components distinguishing
\texttt{ACTS} from Agent Q-Mix \citep{agent_qmix}. Replacing the topological attention
mixer with QMIX \citep{qmix} induces the strongest degradation, lowering macro-average
accuracy from $75.01\%$ to $71.83\%$ on GPT-OSS-120B \citep{openai2025gptoss} and from $69.43\%$ to
$63.24\%$ on DeepSeek-V4-Flash \citep{deepseek2026v4flash}, with token usage rising from $5.97$ to
$7.41$ and from $4.03$ to $4.96$. These results support the effectiveness of the proposed topological
attention mixer relative to the QMIX replacement under the evaluated
settings. Substituting fixed role-based
partners for learned relational routing reduces accuracy to $72.50\%$ and
$66.10\%$ and raises token usage to $8.10$ and $6.20$, the largest overhead
among the variants, as fixed assignments cannot adapt partner selection
to the task. With role-specific LLM prompts retained, removing role embeddings
from relational attention reduces macro-average accuracy to
$73.40\%$ and $67.10\%$ and increases token usage to
$6.80\times10^3$ and $4.60\times10^3$ on GPT-OSS-120B and
DeepSeek-V4-Flash, respectively. These results support the contribution
of explicit role conditioning to task performance and communication
efficiency.

\begin{table}[!htbp]
\centering
\caption{Ablation results on the two backbone models.
$\overline{\mathrm{ACC}}$ and $\overline{\mathrm{Token}}$ denote the
macro-average accuracy (\%) and macro-average per-sample token usage
($\times 10^{3}$) across the nine benchmarks. The full
\texttt{ACTS} model is highlighted.}
\label{tab:ablation}

\footnotesize
\setlength{\tabcolsep}{4pt}
\renewcommand{\arraystretch}{1.15}

\begin{tabular}{lrrrr}
\toprule
& \multicolumn{2}{c}{GPT-OSS-120B}
& \multicolumn{2}{c}{DeepSeek-V4-Flash} \\
\cmidrule(lr){2-3} \cmidrule(lr){4-5}
Variant
& $\overline{\mathrm{ACC}}$ & $\overline{\mathrm{Token}}$
& $\overline{\mathrm{ACC}}$ & $\overline{\mathrm{Token}}$ \\
\midrule
\rowcolor{actsblue}
\textbf{ACTS}
& \textbf{75.01} & \textbf{5.97}
& \textbf{69.43} & \textbf{4.03} \\
\midrule
(a) Topological mixer replaced by QMIX
& 71.83 & 7.41 & 63.24 & 4.96 \\
(b) Learned routing replaced by fixed partners
& 72.50 & 8.10 & 66.10 & 6.20 \\
(c) Role embeddings removed
& 73.40 & 6.80 & 67.10 & 4.60 \\
\bottomrule
\end{tabular}
\end{table}

% \begin{table}[!htbp]
% \centering
% \caption{Ablation results on the two backbone models.
% $\overline{\mathrm{ACC}}$ and $\overline{\mathrm{Token}}$ denote the
% macro-average accuracy (\%) and macro-average per-sample token usage
% ($\times 10^{3}$) across the nine benchmarks. The full
% \texttt{ACTS} model is highlighted.}
% \label{tab:ablation}

% \footnotesize
% \setlength{\tabcolsep}{8pt}
% \renewcommand{\arraystretch}{1.15}

% \begin{tabular}{lcccc}
% \toprule
% & \multicolumn{2}{c}{GPT-OSS-120B}
% & \multicolumn{2}{c}{DeepSeek-V4-Flash} \\
% \cmidrule(lr){2-3} \cmidrule(lr){4-5}
% Variant
% & $\overline{\mathrm{ACC}}$ & $\overline{\mathrm{Token}}$
% & $\overline{\mathrm{ACC}}$ & $\overline{\mathrm{Token}}$ \\
% \midrule

% \rowcolor{actsblue}
% \textbf{ACTS}
% & \textbf{75.01} & \textbf{5.97}
% & \textbf{69.43} & \textbf{4.03} \\

% \midrule

% \textit{(a)}\enspace Topological mixer replaced by QMIX
% & 71.83 & 7.41
% & 63.24 & 4.96 \\

% \textit{(b)}\enspace Learned routing replaced by fixed role-based partners
% & 72.50 & 8.10
% & 66.10 & 6.20 \\

% \textit{(c)}\enspace Role embeddings removed from relational attention
% & 73.40 & 6.80
% & 67.10 & 4.60 \\

% \bottomrule
% \end{tabular}
% \end{table}

\section{Conclusion}
\label{sec:conclusion}

We introduced \texttt{ACTS}, a role-aware adaptive communication-topology
scheduling framework for efficient LLM-based multi-agent collaboration.
Under CTDE, its dual-attention design jointly learns when communication
should occur and which directed agent pairs should exchange information:
role-aware relational attention selects variable-cardinality partners
from agent roles and evolving execution states, while a topological
attention mixer incorporates the realized directed topology into team-value
factorization for topology-aware credit assignment. Across nine benchmarks
and two frozen backbone LLMs, \texttt{ACTS} improves the accuracy--efficiency
trade-off over Agent Q-Mix by up to $6.19$ percentage points in accuracy
and up to $15.42\%$ in token reduction. These results demonstrate the
benefit of learning communication modes and directed partner relations
jointly for efficient role-aware LLM collaboration.

\clearpage

\section*{AI Use Statement}

Generative AI tools were used to assist with literature search and reference
verification, drafting and revision of manuscript text, language editing, and
LaTeX formatting. They were not used to collect or generate data, run
experiments, modify empirical results, or determine the final scientific
claims. Other required disclosure categories are not applicable to this work.

All AI-assisted material was reviewed and revised by the authors. The authors
independently checked the manuscript text, equations, proofs, citations,
bibliographic links, reported values, and LaTeX compilation. The final
experimental results and conclusions were verified against the corresponding
experimental records. The authors take full responsibility for the final
content of this work, including all text, claims, references, code, and other
artifacts.

\bibliography{iclr2027_conference}

\begin{thebibliography}{38}
\providecommand{\natexlab}[1]{#1}
\providecommand{\url}[1]{\texttt{#1}}
\expandafter\ifx\csname urlstyle\endcsname\relax
  \providecommand{\doi}[1]{doi: #1}\else
  \providecommand{\doi}{doi: \begingroup \urlstyle{rm}\Url}\fi

\bibitem[{ByteDance-Seed}(2025)]{bytedance2025beyondaime}
{ByteDance-Seed}.
\newblock {BeyondAIME}: Advancing math reasoning evaluation beyond high school olympiads.
\newblock Hugging Face Datasets, 2025.
\newblock URL \url{https://huggingface.co/datasets/ByteDance-Seed/BeyondAIME}.

\bibitem[Chen et~al.(2021)Chen, Tworek, Jun, Yuan, Pinto, Kaplan, Edwards, Burda, Joseph, Brockman, et~al.]{chen2021humaneval}
Mark Chen, Jerry Tworek, Heewoo Jun, Qiming Yuan, Henrique Ponde de~Oliveira Pinto, Jared Kaplan, Harri Edwards, Yuri Burda, Nicholas Joseph, Greg Brockman, et~al.
\newblock Evaluating large language models trained on code.
\newblock \emph{arXiv preprint arXiv:2107.03374}, 2021.
\newblock URL \url{https://arxiv.org/abs/2107.03374}.

\bibitem[Chen et~al.(2023)Chen, Su, Zuo, Yang, Yuan, Chan, Yu, Lu, Hung, Qian, Qin, Cong, Xie, Liu, Sun, and Zhou]{chen2023agentverse}
Weize Chen, Yusheng Su, Jingwei Zuo, Cheng Yang, Chenfei Yuan, Chi-Min Chan, Heyang Yu, Yaxi Lu, Yi-Hsin Hung, Chen Qian, Yujia Qin, Xin Cong, Ruobing Xie, Zhiyuan Liu, Maosong Sun, and Jie Zhou.
\newblock {AgentVerse}: Facilitating multi-agent collaboration and exploring emergent behaviors, 2023.
\newblock URL \url{https://arxiv.org/abs/2308.10848}.

\bibitem[Cover \& Thomas(2006)Cover and Thomas]{cover_thomas}
Thomas~M. Cover and Joy~A. Thomas.
\newblock \emph{Elements of Information Theory}.
\newblock Wiley-Interscience, 2 edition, 2006.
\newblock \doi{10.1002/047174882X}.
\newblock URL \url{https://books.google.com/books?id=j0DBDwAAQBAJ}.

\bibitem[{DeepSeek-AI}(2026)]{deepseek2026v4flash}
{DeepSeek-AI}.
\newblock Deepseek-v4-flash.
\newblock Official model documentation, 2026.
\newblock URL \url{https://api-docs.deepseek.com/}.

\bibitem[Du et~al.(2023)Du, Li, Torralba, Tenenbaum, and Mordatch]{llm_debate}
Yilun Du, Shuang Li, Antonio Torralba, Joshua~B. Tenenbaum, and Igor Mordatch.
\newblock Improving factuality and reasoning in language models through multiagent debate, 2023.
\newblock URL \url{https://arxiv.org/abs/2305.14325}.

\bibitem[Foerster et~al.(2016)Foerster, Assael, de~Freitas, and Whiteson]{foerster2016}
Jakob~N. Foerster, Ioannis~Alexandros Assael, Nando de~Freitas, and Shimon Whiteson.
\newblock Learning to communicate with deep multi-agent reinforcement learning.
\newblock In \emph{Advances in Neural Information Processing Systems}, volume~29, 2016.
\newblock URL \url{https://papers.nips.cc/paper_files/paper/2016/hash/c7635bfd99248a2cdef8249ef7bfbef4-Abstract.html}.

\bibitem[Hong et~al.(2023)Hong, Zhuge, Chen, Zheng, Cheng, Zhang, Wang, Wang, Yau, Lin, Zhou, Ran, Xiao, Wu, and Schmidhuber]{hong2023metagpt}
Sirui Hong, Mingchen Zhuge, Jonathan Chen, Xiawu Zheng, Yuheng Cheng, Ceyao Zhang, Jinlin Wang, Zili Wang, Steven Ka~Shing Yau, Zijuan Lin, Liyang Zhou, Chenyu Ran, Lingfeng Xiao, Chenglin Wu, and J{\"u}rgen Schmidhuber.
\newblock {MetaGPT}: Meta programming for a multi-agent collaborative framework, 2023.
\newblock URL \url{https://arxiv.org/abs/2308.00352v5}.

\bibitem[Jain et~al.(2024)Jain, Han, Gu, Li, Yan, Zhang, Wang, Solar-Lezama, Sen, and Stoica]{jain2024livecodebench}
Naman Jain, King Han, Alex Gu, Wen-Ding Li, Fanjia Yan, Tianjun Zhang, Sida Wang, Armando Solar-Lezama, Koushik Sen, and Ion Stoica.
\newblock {LiveCodeBench}: Holistic and contamination free evaluation of large language models for code.
\newblock \emph{arXiv preprint arXiv:2403.07974}, 2024.
\newblock URL \url{https://arxiv.org/abs/2403.07974}.

\bibitem[Jiang et~al.(2025)Jiang, Li, Wan, Yin, Wu, Liang, Li, Sun, Wang, Chang, and Wu]{gtd}
Eric~Hanchen Jiang, Mengting Li, Guancheng Wan, Sophia Yin, Yuchen Wu, Xiao Liang, Xinfeng Li, Yizhou Sun, Wei Wang, Kai-Wei Chang, and Ying~Nian Wu.
\newblock Dynamic generation of multi-{LLM} agents communication topologies with graph diffusion models, 2025.
\newblock URL \url{https://arxiv.org/abs/2510.07799}.

\bibitem[Jiang et~al.(2026)Jiang, Li, Sun, Liang, Li, Wu, Luo, Li, Zhang, Kang, Chang, and Wu]{agent_qmix}
Eric~Hanchen Jiang, Levina Li, Rui Sun, Xiao Liang, Yubei Li, Yuchen Wu, Haozheng Luo, Hengli Li, Zhi Zhang, Zhaolu Kang, Kai-Wei Chang, and Ying~Nian Wu.
\newblock {Agent Q-Mix}: Selecting the right action for {LLM} multi-agent systems through reinforcement learning, 2026.
\newblock URL \url{https://arxiv.org/abs/2604.00344}.

\bibitem[{LangChain AI}(2024)]{langchain2024langgraph}
{LangChain AI}.
\newblock {LangGraph}.
\newblock Software repository, 2024.
\newblock URL \url{https://github.com/langchain-ai/langgraph}.

\bibitem[Li et~al.(2023)Li, Hammoud, Itani, Khizbullin, and Ghanem]{li2023camel}
Guohao Li, Hasan Hammoud, Hani Itani, Dmitrii Khizbullin, and Bernard Ghanem.
\newblock {CAMEL}: Communicative agents for "mind" exploration of large language model society.
\newblock In \emph{Advances in Neural Information Processing Systems}, volume~36, 2023.
\newblock URL \url{https://proceedings.neurips.cc/paper_files/paper/2023/hash/a3621ee907def47c1b952ade25c67698-Abstract-Conference.html}.

\bibitem[Li et~al.(2025)Li, Liu, Wen, Zhang, and Pan]{li2025assemble}
Shiyuan Li, Yixin Liu, Qingsong Wen, Chengqi Zhang, and Shirui Pan.
\newblock Assemble your crew: Automatic multi-agent communication topology design via autoregressive graph generation, 2025.
\newblock URL \url{https://arxiv.org/abs/2507.18224}.

\bibitem[Liu et~al.(2024)Liu, Zhang, Li, Liu, and Yang]{liu2024dylan}
Zijun Liu, Yanzhe Zhang, Peng Li, Yang Liu, and Diyi Yang.
\newblock A dynamic {LLM}-powered agent network for task-oriented agent collaboration.
\newblock In \emph{Proceedings of the First Conference on Language Modeling}, 2024.
\newblock URL \url{https://openreview.net/forum?id=XII0Wp1XA9}.

\bibitem[{MathArena}(2025)]{matharena2025hmmt}
{MathArena}.
\newblock {HMMT} february 2025 competition problems.
\newblock Hugging Face Datasets, 2025.
\newblock URL \url{https://huggingface.co/datasets/MathArena/hmmt_feb_2025}.

\bibitem[{Mathematical Association of America}(2025)]{maa2025aime}
{Mathematical Association of America}.
\newblock {2025 AIME I}.
\newblock AoPS Wiki, 2025.
\newblock URL \url{https://artofproblemsolving.com/wiki/index.php/2025_AIME_I}.

\bibitem[{Mathematical Association of America}(2026)]{maa2026aime}
{Mathematical Association of America}.
\newblock {2026 AIME I}.
\newblock AoPS Wiki, 2026.
\newblock URL \url{https://artofproblemsolving.com/wiki/index.php/2026_AIME_I}.

\bibitem[{Microsoft}(2025)]{microsoft2025agentframework}
{Microsoft}.
\newblock {Microsoft Agent Framework}.
\newblock Software repository, 2025.
\newblock URL \url{https://github.com/microsoft/agent-framework}.

\bibitem[Oliehoek \& Amato(2016)Oliehoek and Amato]{oliehoek2016}
Frans~A. Oliehoek and Christopher Amato.
\newblock \emph{A Concise Introduction to Decentralized {POMDPs}}.
\newblock Springer, 2016.
\newblock \doi{10.1007/978-3-319-28929-8}.
\newblock URL \url{https://link.springer.com/book/10.1007/978-3-319-28929-8}.

\bibitem[{OpenAI}(2025)]{openai2025gptoss}
{OpenAI}.
\newblock {gpt-oss-120b} \& {gpt-oss-20b} model card.
\newblock \emph{arXiv preprint arXiv:2508.10925}, 2025.
\newblock URL \url{https://arxiv.org/abs/2508.10925}.

\bibitem[Phan et~al.(2025)Phan, Gatti, Han, Li, Hu, Zhang, Zhang, Shaaban, Ling, Shi, et~al.]{hle2025}
Long Phan, Alice Gatti, Ziwen Han, Nathaniel Li, Josephina Hu, Hugh Zhang, Chen Bo~Calvin Zhang, Mohamed Shaaban, John Ling, Sean Shi, et~al.
\newblock Humanity's last exam.
\newblock \emph{arXiv preprint arXiv:2501.14249}, 2025.
\newblock URL \url{https://arxiv.org/abs/2501.14249}.

\bibitem[Qian et~al.(2024)Qian, Liu, Liu, Chen, Dang, Li, Yang, Chen, Su, Cong, Xu, Li, Liu, and Sun]{qian2024chatdev}
Chen Qian, Wei Liu, Hongzhang Liu, Nuo Chen, Yufan Dang, Jiahao Li, Cheng Yang, Weize Chen, Yusheng Su, Xin Cong, Juyuan Xu, Dahai Li, Zhiyuan Liu, and Maosong Sun.
\newblock {ChatDev}: Communicative agents for software development.
\newblock In \emph{Proceedings of the 62nd Annual Meeting of the Association for Computational Linguistics (Volume 1: Long Papers)}, pp.\  15174--15186. Association for Computational Linguistics, 2024.
\newblock \doi{10.18653/v1/2024.acl-long.810}.
\newblock URL \url{https://aclanthology.org/2024.acl-long.810/}.

\bibitem[Rashid et~al.(2018)Rashid, Samvelyan, Schroeder~de Witt, Farquhar, Foerster, and Whiteson]{qmix}
Tabish Rashid, Mikayel Samvelyan, Christian Schroeder~de Witt, Gregory Farquhar, Jakob~N. Foerster, and Shimon Whiteson.
\newblock {QMIX}: Monotonic value function factorisation for deep multi-agent reinforcement learning.
\newblock In \emph{Proceedings of the 35th International Conference on Machine Learning}, volume~80 of \emph{Proceedings of Machine Learning Research}, pp.\  4295--4304. PMLR, 2018.
\newblock URL \url{https://proceedings.mlr.press/v80/rashid18a.html}.

\bibitem[Son et~al.(2019)Son, Kim, Kang, Hostallero, and Yi]{son2019qtran}
Kyunghwan Son, Daewoo Kim, Wan~Ju Kang, David~Earl Hostallero, and Yung Yi.
\newblock {QTRAN}: Learning to factorize with transformation for cooperative multi-agent reinforcement learning.
\newblock In \emph{Proceedings of the 36th International Conference on Machine Learning}, volume~97 of \emph{Proceedings of Machine Learning Research}, pp.\  5887--5896. PMLR, 2019.
\newblock URL \url{https://proceedings.mlr.press/v97/son19a.html}.

\bibitem[Sukhbaatar et~al.(2016)Sukhbaatar, Szlam, and Fergus]{sukhbaatar2016}
Sainbayar Sukhbaatar, Arthur Szlam, and Rob Fergus.
\newblock Learning multiagent communication with backpropagation.
\newblock In \emph{Advances in Neural Information Processing Systems}, volume~29, 2016.
\newblock URL \url{https://papers.nips.cc/paper_files/paper/2016/hash/55b1927fdafef39c48e5b73b5d61ea60-Abstract.html}.

\bibitem[Sun et~al.(2026)Sun, Ding, Gong, Gu, Jiang, Zhang, Pan, and L{\"u}]{topodim}
Rui Sun, Jie Ding, Chenghua Gong, Tianjun Gu, Yihang Jiang, Juyuan Zhang, Liming Pan, and Linyuan L{\"u}.
\newblock {TopoDIM}: One-shot topology generation of diverse interaction modes for multi-agent systems, 2026.
\newblock URL \url{https://arxiv.org/abs/2601.10120}.

\bibitem[Sunehag et~al.(2018)Sunehag, Lever, Gruslys, Czarnecki, Zambaldi, Jaderberg, Lanctot, Sonnerat, Leibo, Tuyls, and Graepel]{sunehag2018}
Peter Sunehag, Guy Lever, Audrunas Gruslys, Wojciech~Marian Czarnecki, Vinicius Zambaldi, Max Jaderberg, Marc Lanctot, Nicolas Sonnerat, Joel~Z. Leibo, Karl Tuyls, and Thore Graepel.
\newblock Value-decomposition networks for cooperative multi-agent learning based on team reward.
\newblock In \emph{Proceedings of the 17th International Conference on Autonomous Agents and MultiAgent Systems}, pp.\  2085--2087, 2018.
\newblock URL \url{https://www.ifaamas.org/Proceedings/aamas2018/pdfs/p2085.pdf}.

\bibitem[Wang et~al.(2021)Wang, Ren, Liu, Yu, and Zhang]{wang2021qplex}
Jianhao Wang, Zhizhou Ren, Terry Liu, Yang Yu, and Chongjie Zhang.
\newblock {QPLEX}: Duplex dueling multi-agent q-learning.
\newblock In \emph{International Conference on Learning Representations}, 2021.
\newblock URL \url{https://openreview.net/forum?id=Rcmk0xxIQV}.

\bibitem[Wang et~al.(2024)Wang, Ma, Zhang, Ni, Chandra, Guo, Ren, Arulraj, He, Jiang, Li, Ku, Wang, Zhuang, Fan, Yue, and Chen]{wang2024mmlupro}
Yubo Wang, Xueguang Ma, Ge~Zhang, Yuansheng Ni, Abhranil Chandra, Shiguang Guo, Weiming Ren, Aaran Arulraj, Xuan He, Ziyan Jiang, Tianle Li, Max Ku, Kai Wang, Alex Zhuang, Rongqi Fan, Xiang Yue, and Wenhu Chen.
\newblock {MMLU-Pro}: A more robust and challenging multi-task language understanding benchmark.
\newblock In \emph{Advances in Neural Information Processing Systems}, volume~37, 2024.
\newblock \doi{10.52202/079017-3018}.
\newblock URL \url{https://proceedings.neurips.cc/paper_files/paper/2024/hash/ad236edc564f3e3156e1b2feafb99a24-Abstract.html}.

\bibitem[Wang et~al.(2025)Wang, Wang, Liu, Ding, Zhang, Liu, and Zhang]{wang2025agentdropout}
Zhexuan Wang, Yutong Wang, Xuebo Liu, Liang Ding, Miao Zhang, Jie Liu, and Min Zhang.
\newblock {AgentDropout}: Dynamic agent elimination for token-efficient and high-performance {LLM}-based multi-agent collaboration, 2025.
\newblock URL \url{https://arxiv.org/abs/2503.18891}.

\bibitem[Wu et~al.(2023)Wu, Bansal, Zhang, Wu, Li, Zhu, Jiang, Zhang, Zhang, Liu, Awadallah, White, Burger, and Wang]{autogen}
Qingyun Wu, Gagan Bansal, Jieyu Zhang, Yiran Wu, Beibin Li, Erkang Zhu, Li~Jiang, Xiaoyun Zhang, Shaokun Zhang, Jiale Liu, Ahmed~Hassan Awadallah, Ryen~W. White, Doug Burger, and Chi Wang.
\newblock {AutoGen}: Enabling next-gen {LLM} applications via multi-agent conversation, 2023.
\newblock URL \url{https://arxiv.org/abs/2308.08155}.

\bibitem[Yan et~al.(2025)Yan, Zhou, Zhang, Zhang, Zhou, Miao, Li, Li, and Zhang]{yan2025beyondselftalk}
Bingyu Yan, Zhibo Zhou, Litian Zhang, Lian Zhang, Ziyi Zhou, Dezhuang Miao, Zhoujun Li, Chaozhuo Li, and Xiaoming Zhang.
\newblock Beyond self-talk: A communication-centric survey of llm-based multi-agent systems.
\newblock \emph{arXiv preprint arXiv:2502.14321}, 2025.
\newblock \doi{10.48550/arXiv.2502.14321}.
\newblock URL \url{https://arxiv.org/abs/2502.14321}.

\bibitem[Yang et~al.(2020)Yang, Hao, Liao, Shao, Chen, Liu, and Tang]{yang2020qatten}
Yaodong Yang, Jianye Hao, Ben Liao, Kun Shao, Guangyong Chen, Wulong Liu, and Hongyao Tang.
\newblock {Qatten}: A general framework for cooperative multiagent reinforcement learning.
\newblock \emph{arXiv preprint arXiv:2002.03939}, 2020.
\newblock URL \url{https://arxiv.org/abs/2002.03939}.

\bibitem[Yang et~al.(2018)Yang, Qi, Zhang, Bengio, Cohen, Salakhutdinov, and Manning]{yang2018hotpotqa}
Zhilin Yang, Peng Qi, Saizheng Zhang, Yoshua Bengio, William~W. Cohen, Ruslan Salakhutdinov, and Christopher~D. Manning.
\newblock {HotpotQA}: A dataset for diverse, explainable multi-hop question answering.
\newblock In \emph{Proceedings of the 2018 Conference on Empirical Methods in Natural Language Processing}, pp.\  2369--2380. Association for Computational Linguistics, 2018.
\newblock \doi{10.18653/v1/D18-1259}.
\newblock URL \url{https://aclanthology.org/D18-1259/}.

\bibitem[Zhang et~al.(2024)Zhang, Yue, Sun, Wan, Yu, Fang, Wang, Chen, and Cheng]{gdesigner}
Guibin Zhang, Yanwei Yue, Xiangguo Sun, Guancheng Wan, Miao Yu, Junfeng Fang, Kun Wang, Tianlong Chen, and Dawei Cheng.
\newblock {G-Designer}: Architecting multi-agent communication topologies via graph neural networks, 2024.
\newblock URL \url{https://arxiv.org/abs/2410.11782}.

\bibitem[Zhang et~al.(2025)Zhang, Niu, Fang, Wang, Bai, and Wang]{maas}
Guibin Zhang, Luyang Niu, Junfeng Fang, Kun Wang, Lei Bai, and Xiang Wang.
\newblock Multi-agent architecture search via agentic supernet, 2025.
\newblock URL \url{https://arxiv.org/abs/2502.04180}.

\bibitem[Zhuge et~al.(2024)Zhuge, Wang, Kirsch, Faccio, Khizbullin, and Schmidhuber]{zhuge2024gptswarm}
Mingchen Zhuge, Wenyi Wang, Louis Kirsch, Francesco Faccio, Dmitrii Khizbullin, and J{\"u}rgen Schmidhuber.
\newblock {GPTSwarm}: Language agents as optimizable graphs.
\newblock In \emph{Proceedings of the 41st International Conference on Machine Learning}, volume 235 of \emph{Proceedings of Machine Learning Research}, pp.\  62743--62767. PMLR, 2024.
\newblock URL \url{https://proceedings.mlr.press/v235/zhuge24a.html}.

\end{thebibliography}
\bibliographystyle{iclr2027_conference}

\appendix
\section{Value Function Factorization in Cooperative Multi-Agent Reinforcement Learning}
\label{app:marl_value_factorization}

\subsection{Centralized Training with Decentralized Execution}

Cooperative multi-agent reinforcement learning (MARL) considers a set of
agents that jointly maximize a shared return. Let $s^t$ denote the
global state, $\tau_i^t$ the local action-observation history of agent
$i$, and
$\boldsymbol u^t=(u_1^t,\ldots,u_N^t)$ the joint action. The objective is
\begin{equation}
J(\pi)
=
\mathbb E_{\pi}
\left[
\sum_{t=1}^{T}
\gamma^{t-1}r^t
\right],
\label{eq:app_marl_return}
\end{equation}
where $r^t$ is the shared reward and $\gamma\in[0,1)$ is the discount
factor.

In partially observable environments, each agent makes decisions from
its local action-observation history. Independent value-based learning is
challenging because the joint action space grows exponentially with the
number of agents, while the effective environment of each agent changes
as the other agents update their policies. Centralized training with
decentralized execution (CTDE) addresses these issues by allowing the
training procedure to access centralized information, while restricting
execution to local histories.

Let $Q_i(\tau_i^t,u_i)$ denote the individual action-value function of
agent $i$. Under CTDE, the centralized joint action-value function can
be represented as
\begin{equation}
Q_{\mathrm{tot}}
\left(
s^t,\boldsymbol\tau^t,\boldsymbol u^t
\right)
=
F_{\psi}
\left(
s^t,
\left\{
Q_i(\tau_i^t,u_i^t)
\right\}_{i=1}^{N}
\right),
\label{eq:app_value_factorization}
\end{equation}
where $F_{\psi}$ is a centralized factorization function. This
formulation allows centralized training to model inter-agent
dependencies while retaining decentralized action selection at
execution time.

\subsection{Value Function Factorization}

Value Decomposition Networks (VDN) use the additive factorization
\begin{equation}
Q_{\mathrm{tot}}
\left(
\boldsymbol\tau,\boldsymbol u
\right)
=
\sum_{i=1}^{N}
Q_i(\tau_i,u_i),
\label{eq:app_vdn}
\end{equation}
which supports decentralized execution but cannot represent general
non-additive interactions among agents \citep{sunehag2018}.

QMIX introduces a nonlinear factorization function
\citep{qmix}:
\begin{equation}
Q_{\mathrm{tot}}
=
F_{\psi}
\left(
Q_1,\ldots,Q_N;s
\right),
\label{eq:app_qmix}
\end{equation}
subject to the monotonicity condition
\begin{equation}
\frac{\partial Q_{\mathrm{tot}}}{\partial Q_i}
\geq 0,
\qquad
i=1,\ldots,N.
\label{eq:app_monotonicity}
\end{equation}
Thus, increasing an individual utility cannot decrease the joint value
when the remaining utilities and the centralized state are fixed. QMIX
typically enforces this condition by generating nonnegative factorization
weights from the global state through hypernetworks.

Subsequent methods improve the expressiveness of value function
factorization in different ways. QTRAN introduces transformation and
consistency constraints to represent a broader class of joint value
functions \citep{son2019qtran}. QPLEX uses a duplex dueling architecture
to construct a more expressive value--advantage factorization while
preserving decentralized consistency \citep{wang2021qplex}. Attention-
based factorization methods assign context-dependent weights to
individual utilities, allowing their contributions to vary with the
centralized context \citep{yang2020qatten}.

\subsection{Monotonic Value Mixing and the IGM Property}

A factorized joint action-value function satisfies the
Individual--Global--Max (IGM) \citep{sunehag2018, qmix, son2019qtran} property if its globally greedy action is
consistent with the individual greedy actions:
\begin{equation}
\operatorname*{arg\,max}_{\boldsymbol u}
Q_{\mathrm{tot}}
\left(
s,\boldsymbol\tau,\boldsymbol u
\right)
=
\left(
\operatorname*{arg\,max}_{u_1}Q_1(\tau_1,u_1),
\ldots,
\operatorname*{arg\,max}_{u_N}Q_N(\tau_N,u_N)
\right).
\label{eq:app_igm_definition}
\end{equation}
With consistent tie-breaking, this property means that the globally
optimal joint action can be recovered by independently maximizing the
individual action-value functions.

Componentwise monotonicity is a sufficient condition for IGM. For fixed
$s$ and $\boldsymbol\tau$, consider two joint actions that differ only
in agent $i$'s action. If one action produces a larger value of
$Q_i(\tau_i,u_i)$, the monotonicity condition ensures that it cannot
produce a smaller value of $Q_{\mathrm{tot}}$. Applying this argument
coordinate-wise to all agents shows that independent greedy action
selection yields a globally optimal joint action.

\subsection{Temporal-Difference Optimization}

Cooperative value function factorization methods are commonly trained with a
centralized temporal-difference objective while retaining decentralized
execution \citep{sunehag2018,qmix,son2019qtran}. For a transition
$(s^t,\boldsymbol\tau^t,\boldsymbol u^t,r^t,s^{t+1},d^t)$, where
$d^t$ indicates episode termination, a target network forms
\begin{equation}
y^t
=
r^t+
\gamma(1-d^t)
Q_{\mathrm{tot}}^{-}
\left(
s^{t+1},
\boldsymbol\tau^{t+1},
\boldsymbol u^{*,t+1}
\right).
\label{eq:app_td_target}
\end{equation}

When the factorization satisfies the Individual--Global-Max property, the
target joint action can be selected independently
\citep{qmix,wang2021qplex}:
\begin{equation}
u_i^{*,t+1}
=
\operatorname*{arg\,max}_{u_i}
Q_i^{-}
\left(
\tau_i^{t+1},u_i
\right),
\qquad
i=1,\ldots,N.
\label{eq:app_target_action}
\end{equation}

The online parameters are optimized by minimizing
\begin{equation}
\mathcal L_{\mathrm{TD}}
=
\mathbb E_{\mathcal D}
\left[
\left(
y^t-
Q_{\mathrm{tot}}
\left(
s^t,\boldsymbol\tau^t,\boldsymbol u^t
\right)
\right)^2
\right].
\label{eq:app_td_loss}
\end{equation}

Experience replay and delayed target networks are commonly used to stabilize
optimization under deep function approximation
\citep{qmix,son2019qtran}.

\section{Construction of the Directed Communication Topology}
\label{app:directed_topology_construction}

This section gives the complete construction of
$G^{t+1}=\Psi(\boldsymbol a^t,\boldsymbol\alpha^t)$.
After communication-mode selection, define the communication-eligible
agents and candidate partners as
\begin{equation}
\mathcal I_{t+1}^{\mathrm{comm}}
=
\left\{
i\in\mathcal I
\;\middle|\;
a_i^t
\notin
\left\{
a^{\mathrm{off}},a^{\mathrm{solo}}
\right\}
\right\},
\qquad
\mathcal J_i^{t+1}
=
\mathcal I_{t+1}^{\mathrm{comm}}\setminus\{i\}.
\label{eq:app_communication_sets}
\end{equation}
The index $t+1$ indicates that these sets determine the topology
produced by the decisions at round $t$.

For
$i\in\mathcal I_{t+1}^{\mathrm{comm}}$
with
$\mathcal J_i^{t+1}\neq\varnothing$,
define the outgoing and incoming mean scores as
\begin{equation}
\bar{\alpha}_i^{\mathrm{out},t}
=
\frac{1}{|\mathcal J_i^{t+1}|}
\sum_{j\in\mathcal J_i^{t+1}}
\alpha_{ij}^t,
\qquad
\bar{\alpha}_i^{\mathrm{in},t}
=
\frac{1}{|\mathcal J_i^{t+1}|}
\sum_{j\in\mathcal J_i^{t+1}}
\alpha_{ji}^t.
\label{eq:app_mean_relational_scores}
\end{equation}

The requested receiver and sender sets are
\begin{equation}
\begin{aligned}
\mathcal S_i^{t+1}
&=
\begin{cases}
\mathcal J_i^{t+1},
& a_i^t=a^{\mathrm{broadcast}},\\
\left\{
j\in\mathcal J_i^{t+1}
\;\middle|\;
\alpha_{ij}^t>
\bar{\alpha}_i^{\mathrm{out},t}
\right\},
& a_i^t=a^{\mathrm{attn\text{-}broadcast}},\\
\varnothing,
& \text{otherwise},
\end{cases}
\\[1mm]
\mathcal R_i^{t+1}
&=
\begin{cases}
\mathcal J_i^{t+1},
& a_i^t=a^{\mathrm{receive}},\\
\left\{
j\in\mathcal J_i^{t+1}
\;\middle|\;
\alpha_{ji}^t>
\bar{\alpha}_i^{\mathrm{in},t}
\right\},
& a_i^t=a^{\mathrm{attn\text{-}receive}},\\
\varnothing,
& \text{otherwise}.
\end{cases}
\end{aligned}
\label{eq:app_partner_request_sets}
\end{equation}
When
$\mathcal J_i^{t+1}=\varnothing$,
both request sets are defined as empty.

The directed edge set is then
\begin{equation}
\mathcal E^{t+1}
=
\left\{
(i,j)\in
\mathcal I_{t+1}^{\mathrm{comm}}
\times
\mathcal I_{t+1}^{\mathrm{comm}}
\;\middle|\;
i\neq j,\;
j\in\mathcal S_i^{t+1}
\lor
i\in\mathcal R_j^{t+1}
\right\},
\label{eq:app_directed_edge_set}
\end{equation}
and
\begin{equation}
G^{t+1}
=
\left(
\mathcal I,\mathcal E^{t+1}
\right).
\label{eq:app_next_topology}
\end{equation}

Thus, $a^{\mathrm{broadcast}}$ and $a^{\mathrm{receive}}$ request
all eligible interactions in the corresponding direction, whereas
$a^{\mathrm{attn\text{-}broadcast}}$ and
$a^{\mathrm{attn\text{-}receive}}$ retain only interactions whose
scores exceed the corresponding mean. The strict mean threshold
allows the number of selected partners to adapt to the score
distribution without introducing a fixed top-$k$ hyperparameter.

\section{Theoretical Analysis of Topological Value Factorization}
\label{app:acts_theory}

At a non-terminal round $t$, let
$\mathcal I_{\mathrm{comm}}^t\neq\varnothing$ denote the set of agents
eligible to make communication decisions, fixed before the current actions
are selected. The corresponding pre-decision context is
\begin{equation}
\mathcal C^t
=
\left(
s^t,\;G^t,\;\mathcal I_{\mathrm{comm}}^t,\;
\{\xi_i^t,z_i^t\}_{i\in\mathcal I_{\mathrm{comm}}^t}
\right),
\label{eq:app_context}
\end{equation}
where $\xi_i^t$ is the input to the individual action-value function and
$z_i^t$ encodes agent $i$'s directed neighborhoods in $G^t$ for centralized
value factorization. Let $\boldsymbol a^t=(a_i^t)_{i\in\mathcal I}$ denote
the full joint action, with $a_i^t=a^{\mathrm{off}}$ for every
$i\notin\mathcal I_{\mathrm{comm}}^t$. The selected actions determine
\begin{equation}
\mathcal I_{\mathrm{comm}}^{t+1}
=
\left\{
i\in\mathcal I:\;
a_i^t\notin\{a^{\mathrm{off}},a^{\mathrm{solo}}\}
\right\},
\qquad
G^{t+1}
=
\Psi\!\left(\mathcal C^t,\boldsymbol a^t\right),
\label{eq:app_topology_transition}
\end{equation}
where $\Psi$ denotes the construction of
Appendix~\ref{app:directed_topology_construction}. Because every agent
outside $\mathcal I_{\mathrm{comm}}^t$ is assigned $a^{\mathrm{off}}$,
Equation~\ref{eq:app_topology_transition} implies
$\mathcal I_{\mathrm{comm}}^{t+1}\subseteq\mathcal I_{\mathrm{comm}}^t$: the
eligible set contracts monotonically along an episode, so an agent that
selects $a^{\mathrm{off}}$ or $a^{\mathrm{solo}}$ withdraws from
communication for the remainder of the episode.

\subsection{Local Expansion and Interaction-Dependent Coefficients}
\label{app:local_value_expansion}

Fix $\mathcal C^t$ and let
$\boldsymbol q=(q_i)_{i\in\mathcal I_{\mathrm{comm}}^t}$ denote a vector of
individual utility values. Suppose a reference joint action-value function is
represented locally by a map
\[
F_{\mathcal C^t}:\mathcal U\rightarrow\mathbb R,
\qquad
\mathcal U\subseteq
\mathbb R^{|\mathcal I_{\mathrm{comm}}^t|}
\ \text{open},
\qquad
F_{\mathcal C^t}\in C^2(\mathcal U).
\]
This smoothness assumption is imposed in utility space and therefore does not
require the discrete communication actions to be differentiable. For
$\boldsymbol u\in\mathcal U$ define the local sensitivity
\begin{equation}
\beta_i^{\mathrm{ref},t}(\boldsymbol u)
=
\partial_iF_{\mathcal C^t}(\boldsymbol u),
\label{eq:app_local_sensitivity}
\end{equation}
and, for a reference point $\bar{\boldsymbol q}\in\mathcal U$, write
\begin{equation}
\beta_i^{\mathrm{ref},t}
=
\beta_i^{\mathrm{ref},t}(\bar{\boldsymbol q}),
\qquad
c^{\mathrm{ref},t}
=
F_{\mathcal C^t}(\bar{\boldsymbol q})
-
\sum_{i\in\mathcal I_{\mathrm{comm}}^t}
\beta_i^{\mathrm{ref},t}\bar q_i .
\label{eq:app_reference_coefficients}
\end{equation}

\begin{proposition}[Local weighted expansion]
\label{prop:acts_local_expansion}
If the line segment between $\bar{\boldsymbol q}$ and $\boldsymbol q$ lies in
$\mathcal U$, then
\begin{equation}
F_{\mathcal C^t}(\boldsymbol q)
=
c^{\mathrm{ref},t}
+
\sum_{i\in\mathcal I_{\mathrm{comm}}^t}
\beta_i^{\mathrm{ref},t}q_i
+
R_2^t .
\label{eq:app_local_weighted_form}
\end{equation}
If in addition
$\|\nabla^2F_{\mathcal C^t}(\boldsymbol u)\|_{2\rightarrow2}\leq L$
along that segment, then
\begin{equation}
\left|R_2^t\right|
\leq
\frac{L}{2}
\left\|\boldsymbol q-\bar{\boldsymbol q}\right\|_2^2 .
\label{eq:app_local_remainder}
\end{equation}
\end{proposition}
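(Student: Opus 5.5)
The plan is to reduce the statement to Taylor's theorem with integral remainder along the segment. The first step is to define $R_2^t$ as the remainder of the first-order expansion at $\bar{\boldsymbol q}$:
\[
R_2^t := F_{\mathcal C^t}(\boldsymbol q)-F_{\mathcal C^t}(\bar{\boldsymbol q})-\nabla F_{\mathcal C^t}(\bar{\boldsymbol q})^\top(\boldsymbol q-\bar{\boldsymbol q}).
\]
Substituting $\beta_i^{\mathrm{ref},t}=\partial_iF_{\mathcal C^t}(\bar{\boldsymbol q})$ and the definition of $c^{\mathrm{ref},t}$ from Equation~\ref{eq:app_reference_coefficients} gives
\[
F_{\mathcal C^t}(\bar{\boldsymbol q})+\sum_i\beta_i^{\mathrm{ref},t}(q_i-\bar q_i)=c^{\mathrm{ref},t}+\sum_i\beta_i^{\mathrm{ref},t}q_i .
\]
Hence Equation~\ref{eq:app_local_weighted_form} holds identically with this choice of $R_2^t$. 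The first claim is therefore purely algebraic and needs only that $F_{\mathcal C^t}$ is differentiable at $\bar{\boldsymbol q}$.

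For the remainder bound, set $\boldsymbol d=\boldsymbol q-\bar{\boldsymbol q}$ and $g(\tau)=F_{\mathcal C^t}(\bar{\boldsymbol q}+\tau\boldsymbol d)$ for $\tau\in[0,1]$. The segment lies in the open set $\mathcal U$ and $F_{\mathcal C^t}\in C^2(\mathcal U)$, so $g$ is $C^2$ on a neighborhood of $[0,1]$. Its derivatives are $g'(\tau)=\nabla F_{\mathcal C^t}(\bar{\boldsymbol q}+\tau\boldsymbol d)^\top\boldsymbol d$ and $g''(\tau)=\boldsymbol d^\top\nabla^2F_{\mathcal C^t}(\bar{\boldsymbol q}+\tau\boldsymbol d)\boldsymbol d$, by the chain rule. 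The one-variable Taylor formula with integral remainder then gives
\[
R_2^t=g(1)-g(0)-g'(0)=\int_0^1(1-\tau)\,g''(\tau)\,d\tau .
\]

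Next I will bound the integrand using the operator-norm hypothesis along the segment: $|g''(\tau)|\le\|\nabla^2F_{\mathcal C^t}\|_{2\to2}\|\boldsymbol d\|_2^2\le L\|\boldsymbol d\|_2^2$. Integrating, $\int_0^1(1-\tau)\,d\tau=\tfrac12$, which yields Equation~\ref{eq:app_local_remainder}.

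There is no real obstacle here. The only care needed is justifying the integral form of the remainder. This requires continuity of $g''$, which follows from $C^2$ smoothness. It also requires the segment to lie inside $\mathcal U$, which is exactly the stated hypothesis, so that the chain rule applies at every $\tau\in[0,1]$. Alternatively, the Lagrange form $R_2^t=\tfrac12 g''(\tau^\ast)$ for some $\tau^\ast\in(0,1)$ gives the same bound.
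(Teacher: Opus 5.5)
Your proposal is correct and follows essentially the same route as the paper. Both restrict $F_{\mathcal C^t}$ to the segment via $g$, apply the integral form of Taylor's theorem, and use Equation~\ref{eq:app_reference_coefficients} to identify the constant and linear terms with $c^{\mathrm{ref},t}+\sum_i\beta_i^{\mathrm{ref},t}q_i$. The integrand is then bounded by $L\|\boldsymbol q-\bar{\boldsymbol q}\|_2^2$, and $\int_0^1(1-\tau)\,d\tau=\tfrac12$ finishes the estimate. The only difference is presentational: you define $R_2^t$ as the first-order remainder before identifying it with the integral, whereas the paper defines it directly as the integral term.
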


\begin{proof}
Let $\boldsymbol\Delta=\boldsymbol q-\bar{\boldsymbol q}$ and
$g(u)=F_{\mathcal C^t}(\bar{\boldsymbol q}+u\boldsymbol\Delta)$. Since
$g\in C^2([0,1])$, the integral form of Taylor's theorem gives
\begin{equation}
\begin{aligned}
F_{\mathcal C^t}(\boldsymbol q)
={}&
F_{\mathcal C^t}(\bar{\boldsymbol q})
+
\nabla F_{\mathcal C^t}
(\bar{\boldsymbol q})^\top\boldsymbol\Delta
\\
&+
\int_0^1(1-u)\,
\boldsymbol\Delta^\top
\nabla^2F_{\mathcal C^t}
(\bar{\boldsymbol q}+u\boldsymbol\Delta)
\boldsymbol\Delta\,du .
\end{aligned}
\label{eq:app_taylor_expansion}
\end{equation}
Because $\Delta_i=q_i-\bar q_i$ and
$\beta_i^{\mathrm{ref},t}=\partial_iF_{\mathcal C^t}(\bar{\boldsymbol q})$,
the first two terms of Equation~\ref{eq:app_taylor_expansion} equal
$c^{\mathrm{ref},t}+\sum_i\beta_i^{\mathrm{ref},t}q_i$ by
Equation~\ref{eq:app_reference_coefficients}, which yields
Equation~\ref{eq:app_local_weighted_form} with $R_2^t$ the integral term. For
the remainder,
$|\boldsymbol\Delta^\top\nabla^2F_{\mathcal C^t}\boldsymbol\Delta|
\leq
\|\nabla^2F_{\mathcal C^t}\|_{2\rightarrow2}
\|\boldsymbol\Delta\|_2^2
\leq L\|\boldsymbol\Delta\|_2^2$,
and $\int_0^1(1-u)\,du=\tfrac12$ gives
Equation~\ref{eq:app_local_remainder}.
\end{proof}

\begin{remark}
\label{rem:acts_expansion_scope}
Proposition~\ref{prop:acts_local_expansion} is a statement about the
\emph{form} of a local approximation rather than a claim of novelty: every
$C^2$ reference value admits weighted-sum coefficients equal to its local
sensitivities. Its role here is to identify what a mixer must supply, namely
per-agent coefficients that vary with the pre-decision context, and to bound
the error of the weighted-sum form by the curvature of $F_{\mathcal C^t}$.
The analysis that follows concerns \emph{which} quantities those coefficients
depend on.
\end{remark}

The coefficient $\beta_i^{\mathrm{ref},t}$ is the local sensitivity of the
reference joint value to agent $i$'s utility, and its variation across
operating points is governed by mixed derivatives. Applying the fundamental
theorem of calculus to
$u\mapsto\partial_iF_{\mathcal C^t}(\bar{\boldsymbol q}+u\boldsymbol\delta)$
and using the chain rule gives, for any displacement $\boldsymbol\delta$
whose segment lies in $\mathcal U$,
\begin{equation}
\beta_i^{\mathrm{ref},t}
(\bar{\boldsymbol q}+\boldsymbol\delta)
-
\beta_i^{\mathrm{ref},t}(\bar{\boldsymbol q})
=
\sum_{j\in\mathcal I_{\mathrm{comm}}^t}
\delta_j
\int_0^1
\partial_{ij}F_{\mathcal C^t}
(\bar{\boldsymbol q}+u\boldsymbol\delta)\,du .
\label{eq:app_coefficient_variation}
\end{equation}
For $j\neq i$, the mixed derivative
$\partial_{ij}F_{\mathcal C^t}$ characterizes how agent $j$'s utility
affects the joint value's local sensitivity to agent $i$'s utility.
This dependence motivates context-dependent mixing coefficients,
consistent with attention-based value factorization
\citep{yang2020qatten}.

A direct connection to the communication neighborhoods follows once the
reference value is graph structured. Let
\begin{equation}
\mathcal E_{\mathrm{comm}}^t
=
\mathcal E^t\cap
\left(
\mathcal I_{\mathrm{comm}}^t
\times
\mathcal I_{\mathrm{comm}}^t
\right),
\quad
\mathcal N_i^{\mathrm{out},t}
=
\{j:(i,j)\in\mathcal E_{\mathrm{comm}}^t\},
\quad
\mathcal N_i^{\mathrm{in},t}
=
\{j:(j,i)\in\mathcal E_{\mathrm{comm}}^t\}
\label{eq:app_induced_neighborhoods}
\end{equation}
denote the directed, self-loop-free edge set induced on
$\mathcal I_{\mathrm{comm}}^t$ and the associated neighborhoods. Agents that
exchange no messages at round $t$ influence one another only through the
shared team reward, so modelling the interaction terms of the reference value
as supported on $\mathcal E_{\mathrm{comm}}^t$ is the natural structural
assumption.

\begin{proposition}[Graph-supported coefficients]
\label{prop:acts_graph_scopes}
Let $\mathcal S(G^t)$ be a collection of scopes
$S\subseteq\mathcal I_{\mathrm{comm}}^t$ supported on $G^t$, write
$\boldsymbol q_S=(q_j)_{j\in S}$, and suppose
\begin{equation}
F_{\mathcal C^t}(\boldsymbol q)
=
a_{\mathcal C^t}
+
\sum_{S\in\mathcal S(G^t)}
f_S(\boldsymbol q_S;G^t),
\label{eq:app_general_graph_factorization}
\end{equation}
with every $f_S$ continuously differentiable in its arguments. Then
\begin{equation}
\beta_i^{\mathrm{ref},t}(\boldsymbol q)
=
\sum_{\substack{S\in\mathcal S(G^t)\\ i\in S}}
\partial_if_S(\boldsymbol q_S;G^t).
\label{eq:app_general_graph_coefficient}
\end{equation}
\end{proposition}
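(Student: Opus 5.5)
The argument is a direct differentiation of the assumed decomposition in Equation~\ref{eq:app_general_graph_factorization}, combined with the definition of $\beta_i^{\mathrm{ref},t}$ in Equation~\ref{eq:app_local_sensitivity}.

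\textbf{Step 1: embed each scope term in the full utility space.} For each scope $S\in\mathcal S(G^t)$, I will write $f_S(\boldsymbol q_S;G^t)=f_S(P_S\boldsymbol q;G^t)$. Here $P_S$ is the coordinate projection onto the entries indexed by $S$. Since $P_S$ is linear, the chain rule gives
\[
\partial_i\bigl[f_S(P_S\boldsymbol q;G^t)\bigr]=\sum_{j\in S}(P_S)_{ji}\,\partial_jf_S(\boldsymbol q_S;G^t).
\]
The coefficient $(P_S)_{ji}$ equals $1$ if $j=i\in S$ and $0$ otherwise. Hence the derivative is $\partial_if_S(\boldsymbol q_S;G^t)$ when $i\in S$, and it is identically zero when $i\notin S$, because the term does not depend on $q_i$ at all. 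Continuity of the partial derivatives of $f_S$ makes $\boldsymbol q\mapsto f_S(P_S\boldsymbol q;G^t)$ continuously differentiable on $\mathcal U$, so these are genuine partial derivatives of a $C^1$ function.

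\textbf{Step 2: handle the constant and the topology argument.} Throughout, $\mathcal C^t$ is held fixed. Therefore $a_{\mathcal C^t}$ is constant in $\boldsymbol q$ and contributes nothing to the derivative. Likewise $G^t$ enters each $f_S$ only as a fixed parameter, not as a function of $\boldsymbol q$. This is exactly the pre-decision-context convention of Appendix~\ref{app:acts_theory}, so no derivative with respect to the discrete topology ever arises.

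\textbf{Step 3: sum over scopes.} I will use linearity of $\partial_i$ over the sum in Equation~\ref{eq:app_general_graph_factorization}. This is legitimate because $\mathcal S(G^t)$ is a collection of subsets of the finite set $\mathcal I_{\mathrm{comm}}^t$, so the sum is finite. Summing the contributions from Step 1, only the scopes containing $i$ survive, which yields Equation~\ref{eq:app_general_graph_coefficient}.

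\textbf{Main obstacle.} The only point requiring care is well-posedness rather than computation. Equation~\ref{eq:app_general_graph_factorization} must hold as an identity on the open set $\mathcal U$, not merely at a single point, so that $\partial_iF_{\mathcal C^t}$ can be computed term by term. I will state this explicitly. I will also note that the $C^2$ assumption used in Proposition~\ref{prop:acts_local_expansion} is not needed here; $C^1$ of each $f_S$ suffices.

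Finally, I will remark on the interpretation. If the scopes are supported on $G^t$ (for example singletons and the pairs $\{i,j\}$ with $(i,j)$ or $(j,i)$ in $\mathcal E_{\mathrm{comm}}^t$), then $\beta_i^{\mathrm{ref},t}$ depends only on $q_i$ and the utilities of agents in $\mathcal N_i^{\mathrm{in},t}\cup\mathcal N_i^{\mathrm{out},t}$. This is the dependence that motivates conditioning on $z_i^t$.
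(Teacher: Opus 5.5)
Your proposal is correct and follows the paper's proof: differentiate the decomposition term by term in $q_i$, note that $a_{\mathcal C^t}$ and every $f_S$ with $i\notin S$ do not depend on $q_i$, and sum the remaining partial derivatives. Your added details (the projection $P_S$, finiteness of the sum, and requiring the decomposition to hold as an identity on the open set $\mathcal U$ rather than at a single point) state explicitly what the paper's one-line proof assumes.
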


\begin{proof}
Differentiating Equation~\ref{eq:app_general_graph_factorization} with respect
to $q_i$ annihilates $a_{\mathcal C^t}$ and every factor whose scope does not
contain $i$, since such factors do not depend on $q_i$. Summing the
derivatives of the remaining factors gives
Equation~\ref{eq:app_general_graph_coefficient}.
\end{proof}

Proposition~\ref{prop:acts_graph_scopes} places no restriction on the
cardinality of the scopes, so every graph-induced interaction factor
containing agent $i$, including higher-order ones, can contribute to its
coefficient. The pairwise case isolates the two directed contributions.

\begin{corollary}[Directed-neighborhood dependence]
\label{cor:acts_neighborhood_coefficients}
Suppose
\begin{equation}
F_{\mathcal C^t}(\boldsymbol q)
=
a_{\mathcal C^t}
+
\sum_{i\in\mathcal I_{\mathrm{comm}}^t}
f_i(q_i)
+
\sum_{(i,j)\in\mathcal E_{\mathrm{comm}}^t}
f_{ij}(q_i,q_j).
\label{eq:app_graph_factorization}
\end{equation}
Then
\begin{equation}
\begin{aligned}
\beta_i^{\mathrm{ref},t}(\boldsymbol q)
={}&
f_i'(q_i)
+
\sum_{j\in\mathcal N_i^{\mathrm{out},t}}
\partial_1f_{ij}(q_i,q_j)
\\
&+
\sum_{j\in\mathcal N_i^{\mathrm{in},t}}
\partial_2f_{ji}(q_j,q_i).
\end{aligned}
\label{eq:app_neighbor_coefficient}
\end{equation}
\end{corollary}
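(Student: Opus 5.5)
The plan is to obtain the corollary as a direct specialization of Proposition~\ref{prop:acts_graph_scopes}. First, I will cast Equation~\ref{eq:app_graph_factorization} in the scope form of Equation~\ref{eq:app_general_graph_factorization}. The collection $\mathcal S(G^t)$ will consist of two kinds of scopes:
\begin{itemize}
\item the singletons $\{i\}$ for $i\in\mathcal I_{\mathrm{comm}}^t$, carrying $f_{\{i\}}=f_i$;
\item the ordered pairs $(i,j)\in\mathcal E_{\mathrm{comm}}^t$, carrying $f_{(i,j)}=f_{ij}$.
\end{itemize}
I will index scopes by edges rather than by unordered sets. This keeps the two directions $(i,j)$ and $(j,i)$, if both are present, as distinct factors. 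The constant $a_{\mathcal C^t}$ plays the same role as in the proposition. Every scope lies in $\mathcal I_{\mathrm{comm}}^t$, because $\mathcal E_{\mathrm{comm}}^t$ is restricted to that set. The scopes are supported on $G^t$ by construction. I will assume each $f_i$ and $f_{ij}$ is $C^1$, as the proposition requires.

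Second, I will invoke Equation~\ref{eq:app_general_graph_coefficient} and enumerate the scopes containing a fixed agent $i$. There are three groups:
\begin{itemize}
\item the singleton $\{i\}$, which contributes $f_i'(q_i)$;
\item edges $(i,j)$ with $j\in\mathcal N_i^{\mathrm{out},t}$, where $q_i$ is the first argument, contributing $\partial_1 f_{ij}(q_i,q_j)$;
\item edges $(j,i)$ with $j\in\mathcal N_i^{\mathrm{in},t}$, where $q_i$ is the second argument, contributing $\partial_2 f_{ji}(q_j,q_i)$.
\end{itemize}
Since $G^t$ has no self-loops, no edge has $i$ in both slots. The outgoing and incoming groups are therefore disjoint as index sets of factors, and together they exhaust the edge scopes containing $i$. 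This uses the definitions of the neighborhoods in Equation~\ref{eq:app_induced_neighborhoods}. Summing the three groups gives Equation~\ref{eq:app_neighbor_coefficient}.

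The only real care point is the bookkeeping in the second step. I need to make sure that a reciprocal pair $i\to j$, $j\to i$ contributes two separate terms, one in each sum, rather than being merged. I also need to use the absence of self-loops to rule out double counting. Given the edge-indexed choice of scopes made in the first step, this is immediate.
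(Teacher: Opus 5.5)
Your proposal is correct and matches the paper's proof: both specialize Proposition~\ref{prop:acts_graph_scopes} with the singleton scopes together with the edges of $\mathcal E_{\mathrm{comm}}^t$, and both use the absence of self-loops to split the pairwise scopes containing $i$ into outgoing (first-argument) and incoming (second-argument) contributions. Your choice to index scopes by ordered edges, so that reciprocal edges $(i,j)$ and $(j,i)$ stay distinct factors, slightly tightens the paper's phrasing, which lists $\mathcal E_{\mathrm{comm}}^t$ directly as a collection of subsets.
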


\begin{proof}
Apply Proposition~\ref{prop:acts_graph_scopes} with
$\mathcal S(G^t)
=\{\{i\}:i\in\mathcal I_{\mathrm{comm}}^t\}
\cup\mathcal E_{\mathrm{comm}}^t$.
The unary scope containing $i$ contributes $f_i'(q_i)$. Because
$\mathcal E_{\mathrm{comm}}^t$ is directed and self-loop free, a pairwise
scope contains $i$ in exactly one of two ways: for an outgoing edge $(i,j)$
the utility $q_i$ occupies the first argument of $f_{ij}$, and for an
incoming edge $(j,i)$ it occupies the second argument of $f_{ji}$. Summing
the corresponding partial derivatives gives
Equation~\ref{eq:app_neighbor_coefficient}.
\end{proof}

Corollary~\ref{cor:acts_neighborhood_coefficients} is the design
justification for the topology encoder of
Equation~\ref{eq:acts_topology_encoder}. The incoming and outgoing
neighborhoods enter $\beta_i^{\mathrm{ref},t}$ through distinct partial
derivatives, $\partial_2f_{ji}$ and $\partial_1f_{ij}$ respectively, so a
representation that merges them into a single undirected neighborhood
conflates two terms that are not interchangeable. A representation built from
graph-level statistics such as edge count or density is weaker still: it
discards the identity of agent $i$ and therefore cannot express the
restriction to scopes containing $i$ that
Equation~\ref{eq:app_general_graph_coefficient} imposes. \texttt{ACTS}
accordingly encodes both directed neighborhoods in $z_i^t$. It does not
evaluate $\beta_i^{\mathrm{ref},t}$ analytically; it learns a nonnegative
surrogate $\widehat\beta_i^t$ from TD targets, as described next.

\subsection{Topology-Aware TD Credit Assignment}
\label{app:topology_aware_td_credit}

For $i\in\mathcal I_{\mathrm{comm}}^t$ define the individual utility and the
topological-attention logit of head $h$,
\begin{equation}
Q_i^t(a_i)
=
Q_i(\xi_i^t,a_i),
\qquad
\eta_{i,h}^t
=
\frac{
\left(W_{Q,\mathrm{top}}^{(h)}s^t\right)^\top
\left(W_{K,\mathrm{top}}^{(h)}z_i^t\right)
}{\sqrt{d_h}},
\label{eq:app_utility_and_logit}
\end{equation}
with weights normalized over the eligible agents,
\begin{equation}
\lambda_{i,h}^t
=
\frac{\exp\!\left(\eta_{i,h}^t\right)}
{\sum_{j\in\mathcal I_{\mathrm{comm}}^t}
\exp\!\left(\eta_{j,h}^t\right)} .
\label{eq:app_attention_weight}
\end{equation}
Let $w_h^t=\left|\left[f_{\theta_w}(s^t)\right]_h\right|\geq0$ be the
nonnegative head weights and $b^t=V_{\theta_b}(s^t)$ the state-dependent
bias. The head-level and joint action values are
\begin{equation}
\begin{aligned}
Q_h^t(\boldsymbol a)
&=
\sum_{i\in\mathcal I_{\mathrm{comm}}^t}
\lambda_{i,h}^tQ_i^t(a_i),
\\
Q_{\mathrm{tot}}^t(\mathcal C^t,\boldsymbol a)
&=
b^t
+
\sum_{h=1}^{H}w_h^tQ_h^t(\boldsymbol a)
=
b^t
+
\sum_{i\in\mathcal I_{\mathrm{comm}}^t}
\widehat\beta_i^tQ_i^t(a_i),
\end{aligned}
\label{eq:app_acts_mixer}
\end{equation}
where the effective mixing coefficient is
\begin{equation}
\widehat\beta_i^t
=
\sum_{h=1}^{H}w_h^t\lambda_{i,h}^t
\;\geq\;0 .
\label{eq:app_learned_coefficient}
\end{equation}
The second equality in Equation~\ref{eq:app_acts_mixer} follows by exchanging
the two finite sums, and shows that the multi-head mixer is, for a fixed
context, a nonnegatively weighted sum of individual utilities whose weights
are determined by $\{z_i^t\}$ and hence by $G^t$.

\begin{remark}[Functional form of a head]
\label{rem:acts_head_direction}
Since $s^t$ is shared by all agents within a head,
Equation~\ref{eq:app_utility_and_logit} can be written as
$\eta_{i,h}^t=\langle v_h(s^t),z_i^t\rangle$ with
$v_h(s^t)
=(W_{K,\mathrm{top}}^{(h)})^\top
W_{Q,\mathrm{top}}^{(h)}s^t/\sqrt{d_h}$.
Each head therefore contributes one state-dependent linear direction in
neighborhood-embedding space along which agents are compared, and $w_h^t$
determines how much that direction matters in the current state.
\end{remark}

For a sampled transition let $y^t$ denote the detached TD target and set
\begin{equation}
\delta^t
=
Q_{\mathrm{tot}}^t(\mathcal C^t,\boldsymbol a^t)-y^t,
\qquad
\ell_{\mathrm{TD}}^t
=
\tfrac12\left(\delta^t\right)^2 .
\label{eq:acts_app_td_loss}
\end{equation}

\begin{proposition}[Topology-aware TD gradients]
\label{prop:acts_td_gradients}
Conditional on the context $\mathcal C^t$ and with $y^t$ treated as a
constant,
\begin{equation}
\frac{\partial\ell_{\mathrm{TD}}^t}
{\partial Q_i^t(a_i^t)}
=
\delta^t\widehat\beta_i^t,
\qquad
\frac{\partial\ell_{\mathrm{TD}}^t}
{\partial w_h^t}
=
\delta^tQ_h^t(\boldsymbol a^t),
\label{eq:app_utility_gradient}
\end{equation}
and for every eligible agent $i$ and head $h$,
\begin{equation}
\frac{\partial\ell_{\mathrm{TD}}^t}
{\partial\eta_{i,h}^t}
=
\delta^tw_h^t\lambda_{i,h}^t
\left(
Q_i^t(a_i^t)-Q_h^t(\boldsymbol a^t)
\right).
\label{eq:app_attention_gradient}
\end{equation}
\end{proposition}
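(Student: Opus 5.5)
The plan is a direct chain-rule computation, with the context $\mathcal C^t$ held fixed and $y^t$ detached. Under these conditions $\widehat\beta_i^t$, $w_h^t$, $\lambda_{i,h}^t$ and $b^t$ depend on $\mathcal C^t$ but not on the utilities $Q_i^t(a_i^t)$.

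\emph{Utility gradient.} Since $\ell_{\mathrm{TD}}^t=\tfrac12(\delta^t)^2$, we have $\partial\ell_{\mathrm{TD}}^t/\partial X=\delta^t\,\partial Q_{\mathrm{tot}}^t/\partial X$ for any scalar $X$ on which $Q_{\mathrm{tot}}^t$ depends.
\begin{itemize}
\item Use the second form of Equation~\ref{eq:app_acts_mixer}: $Q_{\mathrm{tot}}^t=b^t+\sum_i\widehat\beta_i^tQ_i^t(a_i^t)$.
\item Because the coefficients are fixed under the conditioning, this expression is linear in each $Q_i^t(a_i^t)$.
\item Hence $\partial\ell_{\mathrm{TD}}^t/\partial Q_i^t(a_i^t)=\delta^t\widehat\beta_i^t$.
\end{itemize}

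\emph{Head-weight gradient.} Use the first form, $Q_{\mathrm{tot}}^t=b^t+\sum_hw_h^tQ_h^t(\boldsymbol a^t)$. Treat $w_h^t$ as an independent variable and hold the $\lambda$'s fixed. Then $Q_{\mathrm{tot}}^t$ is linear in $w_h^t$ with coefficient $Q_h^t(\boldsymbol a^t)$, which gives the second identity in Equation~\ref{eq:app_utility_gradient}.

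\emph{Logit gradient.} This is the only step needing care, because $\eta_{i,h}^t$ enters every $\lambda_{k,h}^t$ of head $h$ through the normalizing softmax.
\begin{itemize}
\item From Equation~\ref{eq:app_attention_weight}, the softmax Jacobian is $\partial\lambda_{k,h}^t/\partial\eta_{i,h}^t=\lambda_{k,h}^t(\mathbb 1[k=i]-\lambda_{i,h}^t)$, for $k,i\in\mathcal I_{\mathrm{comm}}^t$.
\item Logits of other heads $h'\neq h$ are unaffected, since normalization is performed per head.
\item $\eta_{i,h}^t$ affects $Q_{\mathrm{tot}}^t$ only through $w_h^tQ_h^t$. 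Therefore
\[
\frac{\partial Q_{\mathrm{tot}}^t}{\partial\eta_{i,h}^t}
=w_h^t\sum_{k}Q_k^t(a_k^t)\,\lambda_{k,h}^t\bigl(\mathbb 1[k=i]-\lambda_{i,h}^t\bigr)
=w_h^t\lambda_{i,h}^t\Bigl(Q_i^t(a_i^t)-\sum_k\lambda_{k,h}^tQ_k^t(a_k^t)\Bigr).
\]
\item The remaining sum is exactly $Q_h^t(\boldsymbol a^t)$. Multiplying by $\delta^t$ yields Equation~\ref{eq:app_attention_gradient}.
\end{itemize}

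The main obstacle is bookkeeping rather than analysis. We must check that the softmax sum runs over $\mathcal I_{\mathrm{comm}}^t$ only, so that $\sum_k\lambda_{k,h}^t=1$ produces the centred form. We must also make explicit that $w_h^t$ is treated as an independent variable, so that the absolute value in its definition only enters upstream, through its gradient with respect to $\theta_w$, which is not part of the claim.
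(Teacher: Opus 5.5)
Your proposal is correct and follows the paper's proof essentially step for step. Both use the chain rule through $\partial\ell_{\mathrm{TD}}^t/\partial Q_{\mathrm{tot}}^t=\delta^t$, the two forms of the mixer for the utility and head-weight gradients, and the per-head softmax Jacobian together with $\sum_k\lambda_{k,h}^tQ_k^t(a_k^t)=Q_h^t(\boldsymbol a^t)$ for the logit gradient.
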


\begin{proof}
~\eqref{eq:app_acts_mixer} gives
$\partial Q_{\mathrm{tot}}^t/\partial Q_i^t(a_i^t)=\widehat\beta_i^t$ and
$\partial Q_{\mathrm{tot}}^t/\partial w_h^t=Q_h^t(\boldsymbol a^t)$; since
$\partial\ell_{\mathrm{TD}}^t/\partial Q_{\mathrm{tot}}^t=\delta^t$ when
$y^t$ is detached, the chain rule yields
Equation~\ref{eq:app_utility_gradient}. For the logits, the softmax Jacobian of
Equation~\ref{eq:app_attention_weight} is
\[
\frac{\partial\lambda_{j,h}^t}{\partial\eta_{i,h}^t}
=
\lambda_{j,h}^t
\left(
\mathbb I[i=j]-\lambda_{i,h}^t
\right),
\]
and $\eta_{i,h}^t$ enters only head $h$, so
\[
\frac{\partial Q_h^t(\boldsymbol a^t)}{\partial\eta_{i,h}^t}
=
\sum_{j\in\mathcal I_{\mathrm{comm}}^t}
Q_j^t(a_j^t)\lambda_{j,h}^t
\left(
\mathbb I[i=j]-\lambda_{i,h}^t
\right)
=
\lambda_{i,h}^t
\left(
Q_i^t(a_i^t)-Q_h^t(\boldsymbol a^t)
\right),
\]
where the last step uses
$\sum_j\lambda_{j,h}^tQ_j^t(a_j^t)=Q_h^t(\boldsymbol a^t)$. Combining with
$\partial Q_{\mathrm{tot}}^t/\partial Q_h^t=w_h^t$ proves
Equation~\ref{eq:app_attention_gradient}.
\end{proof}

~\eqref{eq:app_utility_gradient} shows that $\widehat\beta_i^t$ is
simultaneously the Jacobian of the mixer with respect to agent $i$'s selected
utility and the scale of the direct TD signal that this utility receives:
monotonicity and credit assignment are governed by the same quantity. Because
$\widehat\beta_i^t$ depends on $z_i^t$, and $z_i^t$ encodes the directed
neighborhoods of $G^t$ by
Corollary~\ref{cor:acts_neighborhood_coefficients}, the allocation is
topology aware. ~\eqref{eq:app_attention_gradient} has the form of an
advantage: the deviation $Q_i^t-Q_h^t$ compares agent $i$'s utility against
the weighted mean of head $h$, so a head increases its weight on agents whose
utility exceeds that mean whenever the joint value must increase
($\delta^t<0$), and decreases it otherwise. The head weights are produced by
an absolute value and are therefore differentiable except on the measure-zero
set $[f_{\theta_w}(s^t)]_h=0$, where the subgradient is taken to be zero, as
is standard for monotonic mixers \citep{qmix}.

\begin{remark}[Where credit for partner selection enters]
\label{rem:acts_credit_path}
All quantities in Equation~\ref{eq:app_acts_mixer} are computed from the
pre-decision context, so $Q_{\mathrm{tot}}^t$ does not depend on the topology
$G^{t+1}$ that $\boldsymbol a^t$ induces, and the discrete map $\Psi$ of
Equation~\ref{eq:app_topology_transition} is never differentiated. Credit for a
partner configuration is consequently not delivered within round $t$; it is
delivered through the TD target, because $Q_{\mathrm{tot}}^{t+1}$ is
conditioned on $\mathcal C^{t+1}$ and hence on $G^{t+1}$. A partner set that
improves the team's subsequent value raises $y^t$ and, through
Equation~\ref{eq:app_utility_gradient}, increases $Q_i^t(a_i^t)$ for the actions
that produced it. Relational attention additionally receives gradient within
round $t$ along a differentiable path, since the aggregated feature $\xi_i^t$
of Equation~\ref{eq:acts_cross_attention_selection} enters $Q_i^t$ directly.
\end{remark}

\subsection{Monotonicity and Individual--Global--Max Consistency}
\label{app:acts_igm}

Let each eligible agent have a finite, nonempty action space
$\mathcal A_i^t\subseteq\mathcal A_{\mathrm{comm}}$ and write
\begin{equation}
\mathcal A^t
=
\prod_{i\in\mathcal I_{\mathrm{comm}}^t}
\mathcal A_i^t .
\label{eq:app_joint_action_space}
\end{equation}
The context and the mixing coefficients are held fixed while the elements of
$\mathcal A^t$ are compared. In particular $a^{\mathrm{off}}$ and
$a^{\mathrm{solo}}$ are ordinary candidates: their effect is realized in
$\mathcal I_{\mathrm{comm}}^{t+1}$ and $G^{t+1}$ through
Equation~\ref{eq:app_topology_transition}, not in the current attention
normalization. By Equation~\ref{eq:app_learned_coefficient},
\begin{equation}
\frac{\partial Q_{\mathrm{tot}}^t}
{\partial Q_i^t(a_i)}
=
\widehat\beta_i^t
\;\geq\;0
\qquad
\forall i\in\mathcal I_{\mathrm{comm}}^t,
\label{eq:acts_app_monotonicity}
\end{equation}
so the mixer is monotonic in every individual utility. Define
\begin{equation}
\mathcal M_i^t
=
\operatorname*{arg\,max}_{a_i\in\mathcal A_i^t}
Q_i^t(a_i),
\qquad
\mathcal P^t
=
\left\{i:\widehat\beta_i^t>0\right\},
\qquad
\mathcal Z^t
=
\left\{i:\widehat\beta_i^t=0\right\}.
\label{eq:app_argmax_sets}
\end{equation}

\begin{theorem}[IGM consistency]
\label{thm:acts_igm}
For the fixed context $\mathcal C^t$,
\begin{equation}
\operatorname*{arg\,max}_{\boldsymbol a\in\mathcal A^t}
Q_{\mathrm{tot}}^t(\mathcal C^t,\boldsymbol a)
=
\left(
\prod_{i\in\mathcal P^t}\mathcal M_i^t
\right)
\times
\left(
\prod_{i\in\mathcal Z^t}\mathcal A_i^t
\right),
\label{eq:app_general_argmax}
\end{equation}
so every independently greedy joint action is a joint maximizer. If in
addition the attention logits are finite and
\begin{equation}
W^t
:=
\sum_{h=1}^{H}w_h^t
\;>\;0,
\label{eq:app_positive_head_mass}
\end{equation}
then $\widehat\beta_i^t>0$ for every $i\in\mathcal I_{\mathrm{comm}}^t$ and
the set-valued IGM identity holds exactly:
\begin{equation}
\operatorname*{arg\,max}_{\boldsymbol a\in\mathcal A^t}
Q_{\mathrm{tot}}^t(\mathcal C^t,\boldsymbol a)
=
\prod_{i\in\mathcal I_{\mathrm{comm}}^t}
\operatorname*{arg\,max}_{a_i\in\mathcal A_i^t}
Q_i(\xi_i^t,a_i).
\label{eq:app_exact_igm}
\end{equation}
\end{theorem}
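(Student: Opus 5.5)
The plan is to exploit the separable form of the mixer. By Equation~\ref{eq:app_acts_mixer}, for the fixed context $\mathcal C^t$,
\[
Q_{\mathrm{tot}}^t(\mathcal C^t,\boldsymbol a)=b^t+\sum_{i\in\mathcal I_{\mathrm{comm}}^t}\widehat\beta_i^tQ_i^t(a_i).
\]
Here $b^t$ and every $\widehat\beta_i^t$ are constants: they are computed from $s^t$ and $\{z_i^t\}$ only and do not depend on $\boldsymbol a$. Maximizing over the product set $\mathcal A^t$ therefore decouples into $|\mathcal I_{\mathrm{comm}}^t|$ independent one-dimensional problems, each maximizing $\widehat\beta_i^tQ_i^t(a_i)$ over $a_i\in\mathcal A_i^t$. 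Since each $\mathcal A_i^t$ is finite and nonempty, every such maximum is attained, so all argmax sets are nonempty and the joint maximum equals $b^t+\sum_i\max_{a_i}\widehat\beta_i^tQ_i^t(a_i)$.

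\textbf{Step 1 (per-coordinate argmax).} For each $i$ we identify $\operatorname*{arg\,max}_{a_i}\widehat\beta_i^tQ_i^t(a_i)$.
\begin{itemize}
\item If $\widehat\beta_i^t>0$, multiplication by a positive constant preserves the order, so the argmax is $\mathcal M_i^t$.
\item If $\widehat\beta_i^t=0$, the term is identically zero, so every element of $\mathcal A_i^t$ is a maximizer.
\end{itemize}
Nonnegativity $\widehat\beta_i^t\ge0$ from Equation~\ref{eq:app_learned_coefficient} rules out a third case, in which a negative coefficient would reverse the order.

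\textbf{Step 2 (product characterization).} We show that a joint action $\boldsymbol a$ maximizes the sum if and only if each $a_i$ maximizes its own term.
\begin{itemize}
\item Sufficiency is immediate, since each term then attains its own maximum.
\item For necessity, suppose some $a_i$ does not maximize its term. Replacing $a_i$ by a coordinate maximizer while keeping the other coordinates fixed strictly increases the sum, so $\boldsymbol a$ was not a joint maximizer.
\end{itemize}
Combining with Step 1 gives Equation~\ref{eq:app_general_argmax}. Since $\mathcal M_i^t\subseteq\mathcal A_i^t$ for $i\in\mathcal Z^t$, it also gives the inclusion that every independently greedy action is a joint maximizer.

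\textbf{Step 3 (strict positivity).} Finite logits give $\exp(\eta_{i,h}^t)\in(0,\infty)$. The normalizer is a finite sum over the nonempty set $\mathcal I_{\mathrm{comm}}^t$, so $\lambda_{i,h}^t>0$ for every $i$ and $h$. Then
\[
\widehat\beta_i^t=\sum_h w_h^t\lambda_{i,h}^t\ge\Bigl(\min_h\lambda_{i,h}^t\Bigr)W^t>0,
\]
using $w_h^t\ge0$ and $W^t>0$. Hence $\mathcal Z^t=\varnothing$ and $\mathcal P^t=\mathcal I_{\mathrm{comm}}^t$, and Equation~\ref{eq:app_general_argmax} reduces to Equation~\ref{eq:app_exact_igm}.

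The only step needing care is the necessity direction in Step 2, where the set equality rather than mere inclusion is established. It relies on the product structure of $\mathcal A^t$, so that single coordinates can be changed freely, and on the coefficients being independent of $\boldsymbol a$. The latter is exactly the fixed pre-decision context assumption: $\mathcal I_{\mathrm{comm}}^t$, the normalization set, and the $z_i^t$ are not affected by whether $a_i$ is $a^{\mathrm{off}}$ or $a^{\mathrm{solo}}$. I would state this explicitly, as the appendix does before the theorem.
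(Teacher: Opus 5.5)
Your proof is correct and takes essentially the same route as the paper. Both rely on the fixed-context separable form $b^t+\sum_i\widehat\beta_i^tQ_i^t(a_i)$ with $\widehat\beta_i^t\ge0$, and both get strict positivity from finite logits together with $W^t>0$. The differences are cosmetic: the paper characterizes maximizers via equality in the termwise bound $\widehat\beta_i^tQ_i^t(a_i)\le\widehat\beta_i^t m_i^t$ rather than your single-coordinate exchange, and it lower-bounds $\widehat\beta_i^t$ using one head with $w_{h^\star}^t>0$ rather than $(\min_h\lambda_{i,h}^t)W^t$.
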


\begin{proof}
Let $m_i^t=\max_{a_i\in\mathcal A_i^t}Q_i^t(a_i)$, which exists because
$\mathcal A_i^t$ is finite and nonempty. Since $\widehat\beta_i^t\geq0$ by
Equation~\ref{eq:app_learned_coefficient}, each summand of
Equation~\ref{eq:app_acts_mixer} obeys
$\widehat\beta_i^tQ_i^t(a_i)\leq\widehat\beta_i^tm_i^t$, hence
\begin{equation}
Q_{\mathrm{tot}}^t(\mathcal C^t,\boldsymbol a)
\;\leq\;
b^t
+
\sum_{i\in\mathcal I_{\mathrm{comm}}^t}
\widehat\beta_i^tm_i^t
\label{eq:app_igm_upper_bound}
\end{equation}
for every $\boldsymbol a\in\mathcal A^t$, and the bound is attained. Equality
in Equation~\ref{eq:app_igm_upper_bound} holds if and only if
$Q_i^t(a_i)=m_i^t$, that is $a_i\in\mathcal M_i^t$, for every
$i\in\mathcal P^t$, while for $i\in\mathcal Z^t$ the term
$\widehat\beta_i^tQ_i^t(a_i)$ vanishes and the choice of $a_i$ leaves
$Q_{\mathrm{tot}}^t$ unchanged. This establishes
Equation~\ref{eq:app_general_argmax}, and
$\mathcal M_i^t\subseteq\mathcal A_i^t$ gives the subsequent claim. For the
second part, finite logits make the softmax of
Equation~\ref{eq:app_attention_weight} strictly positive, so
$\lambda_{i,h}^t>0$ for all $i$ and $h$. If $W^t>0$ there exists a head
$h^\star$ with $w_{h^\star}^t>0$, whence
$\widehat\beta_i^t
\geq w_{h^\star}^t\lambda_{i,h^\star}^t>0$
for every eligible agent. Therefore $\mathcal Z^t=\varnothing$ and
Equation~\ref{eq:app_general_argmax} reduces to
Equation~\ref{eq:app_exact_igm}.
\end{proof}

If all head weights vanish, then $Q_{\mathrm{tot}}^t=b^t$ for every candidate
joint action; independently greedy actions remain joint maximizers by
Equation~\ref{eq:app_general_argmax}, but the two maximizing sets need not
coincide. The condition $W^t>0$ of
Equation~\ref{eq:app_positive_head_mass} can be enforced through a strictly
positive head-weight parameterization whenever the set-valued identity is
required.

\begin{remark}[Reachable coefficients and the cost of monotonicity]
\label{rem:acts_reachable_coefficients}
Because $\lambda_{\cdot,h}^t$ is normalized over
$\mathcal I_{\mathrm{comm}}^t$, the learned coefficients satisfy
\begin{equation}
\sum_{i\in\mathcal I_{\mathrm{comm}}^t}
\widehat\beta_i^t
=
\sum_{h=1}^{H}w_h^t
\sum_{i\in\mathcal I_{\mathrm{comm}}^t}
\lambda_{i,h}^t
=
W^t .
\label{eq:app_coefficient_mass}
\end{equation}
The mixer therefore redistributes a single state-dependent mass $W^t$ across
agents rather than rescaling each utility independently, and by
Equation~\ref{eq:acts_app_monotonicity} it cannot represent the negative
sensitivities that Equation~\ref{eq:app_neighbor_coefficient} permits. Both
restrictions are the price of monotonicity, which is what makes
Equation~\ref{eq:app_exact_igm} and hence decentralized greedy execution
available \citep{qmix}. What $\widehat\beta_i^t$ adds relative to a
state-only mixer is not expressiveness in the utilities but the dependence of
the normalized credit profile on $\mathcal N_i^{\mathrm{in},t}$ and
$\mathcal N_i^{\mathrm{out},t}$ established in
Corollary~\ref{cor:acts_neighborhood_coefficients}; the empirical claim of
this paper is that this monotonic, topology-conditioned class suffices for the
communication action space of
Section~\ref{sec:topology_learning}, not that it is universal.
\end{remark}

\begin{remark}[Scope of the consistency result]
\label{rem:acts_igm_scope}
Theorem~\ref{thm:acts_igm} relies on the temporal ordering of
Eqs.~\eqref{eq:app_context}--\eqref{eq:app_topology_transition}: all
candidate actions at round $t$ share one pre-decision context and therefore
one set of mixing coefficients, and the selected action determines only the
next topology. Without this ordering---for instance if the mixer were
conditioned on the topology induced by the candidate action itself---the
coefficients would vary across $\mathcal A^t$, the decomposition in
Equation~\ref{eq:app_acts_mixer} would no longer be a fixed weighted sum, and
monotonicity in each $Q_i^t$ would not imply
Equation~\ref{eq:app_exact_igm}.
\end{remark}

\section{ACTS Training and Decentralized Execution}
\label{app:acts_pseudocode}

\begin{algorithm}[H]
\caption{\texttt{ACTS} Centralized Training}
\label{alg:acts_training}
\footnotesize
\begin{algorithmic}[1]
\setlength{\itemsep}{0pt}

\Require Training tasks $\mathcal D_{\mathrm{train}}$; roles
$\{\rho_i\}_{i=1}^{N}$; episodes $M$; horizon $T$; discount factor
$\gamma$; objective weights $w_{\mathrm{task}},w_{\mathrm{tok}}$;
token scale $B$; initial exploration rate $\epsilon_0$; minibatch size
$b$; target update interval $K$

\Ensure Online parameters $\Theta$ and target parameters $\Theta^{-}$

\State Initialize $\Theta$, $\Theta^{-}\leftarrow\Theta$,
replay buffer $\mathcal B\leftarrow\varnothing$,
$\epsilon\leftarrow\epsilon_0$, and update counter $u\leftarrow0$

\For{$e=1,\ldots,M$}

    \State Sample $x\sim\mathcal D_{\mathrm{train}}$;
    initialize $G^1$, $\mathcal I_{\mathrm{comm}}^1=\mathcal I$,
    $h_i^0=\mathbf 0$, trajectory $\tau\leftarrow(\,)$,
    and $C_{\mathrm{tok}}\leftarrow0$

    \For{$t=1,\ldots,T$}

        \State Encode the pre-decision context
        $\mathcal C^t$ using Equations
        \ref{eq:acts_state_encoding}--\ref{eq:acts_topology_encoder}

        \State Select $\boldsymbol a^t$ with $\epsilon$-greedy local
        utilities; assign $a_i^t=a^{\mathrm{off}}$ to inactive agents

        \State Construct $\mathcal I_{\mathrm{comm}}^{t+1}$ and $G^{t+1}$
        using Equations
        \ref{eq:acts_communication_eligible_set} and
        \ref{eq:acts_induced_topology}

        \State Set
        $d^t\leftarrow
        \mathbf 1[
        |\mathcal I_{\mathrm{comm}}^{t+1}|<\lceil N/2\rceil
        \ \lor\ t=T]$

        \State Execute round $t$ and update $C_{\mathrm{tok}}$;
        set $r^t\leftarrow0$

        \If{$d^t=0$}
            \State Observe $s^{t+1}$ and $\{o_i^{t+1}\}$
        \Else
            \State Set terminal next-state fields to $\bot$
        \EndIf

        \State Append
        $\zeta^t=(s^t,\boldsymbol o^t,G^t,
        \mathcal I_{\mathrm{comm}}^t,\boldsymbol a^t,r^t,
        s^{t+1},\boldsymbol o^{t+1},G^{t+1},
        \mathcal I_{\mathrm{comm}}^{t+1},d^t)$
        to $\tau$

        \If{$d^t=1$}
            \State \textbf{break}
        \EndIf

    \EndFor

    \State Aggregate the final answer and compute
    $R\leftarrow
    w_{\mathrm{task}}R_{\mathrm{task}}
    -w_{\mathrm{tok}}\min(C_{\mathrm{tok}}/B,1)$

    \State Set the reward of the final transition in $\tau$ to $R$
    and add all transitions in $\tau$ to $\mathcal B$

    \If{$|\mathcal B|\geq b$}

        \State Sample a minibatch
        $\{\zeta_k\}_{k=1}^{b}$ from $\mathcal B$

        \State Compute
        \[
        y_k=
        \begin{cases}
        r_k, & d_k=1,\\
        r_k+\gamma
        Q_{\mathrm{tot}}^{-}
        (\boldsymbol a_k^{*,+};\mathcal C_k^{+}),
        & d_k=0,
        \end{cases}
        \]
        where $\boldsymbol a_k^{*,+}$ is selected using the target
        local utilities

        \State Update $\Theta$ by minimizing
        \[
        \mathcal L_{\mathrm{TD}}
        =
        \frac{1}{2b}
        \sum_{k=1}^{b}
        \left[
        y_k-
        Q_{\mathrm{tot}}(\boldsymbol a_k;\mathcal C_k)
        \right]^2
        \]
        with $y_k$ detached and the backbone LLM frozen

        \State Set $u\leftarrow u+1$

        \If{$u\bmod K=0$}
            \State $\Theta^{-}\leftarrow\Theta$
        \EndIf

    \EndIf

    \State Anneal $\epsilon$

\EndFor

\end{algorithmic}
\end{algorithm}

We reuse the GNN--GRU encoder and relational-attention/$Q$ branch of
Section~\ref{sec:topology_learning} and the topology encoder and topological
mixer of Section~\ref{sec:topological_factorization}. For
$G^t=(\mathcal I,\mathcal E^t)$, write
$\mathcal N_i^{\mathrm{in},t}=\{j\mid(j,i)\in\mathcal E^t\}$ and
$\mathcal N_i^{\mathrm{out},t}=\{j\mid(i,j)\in\mathcal E^t\}$ for the directed
neighborhoods encoded by $z_i^t$ in
~\eqref{eq:acts_topology_encoder}. The operator $\textsc{BuildTopology}$ denotes the topology construction map
defined in Appendix~\ref{app:directed_topology_construction}. The six
actions are listed in Appendix~\ref{sec:appendix_action_space}, and the
observation and state vectors are specified in
Appendix~\ref{sec:appendix_states}. The action $a^{\mathrm{off}}$ suppresses
both backbone inference and communication. The action $a^{\mathrm{solo}}$
invokes one independent backbone inference without creating communication
edges, and its output is retained for final answer aggregation. The
corresponding agent is excluded from subsequent communication rounds. All
other actions produce an output and induce communication edges according to
the selected mode. These messages enter the observations of round $t+1$.

Every round is organized around the pre-decision context
$\mathcal C^t=(s^t,G^t,\mathcal I_{\mathrm{comm}}^t,
\{\xi_i^t,z_i^t\}_{i\in\mathcal I_{\mathrm{comm}}^t})$, computed from the
realized topology before any action is selected and held fixed while candidate
actions are compared, so the coefficients
$\widehat\beta_i^t=\sum_{h}w_h^t\lambda_{i,h}^t$ are constants of
$\mathcal C^t$ and ~\eqref{eq:acts_igm} applies. Since agents outside $\mathcal I_{\mathrm{comm}}^t$ are assigned
$a^{\mathrm{off}}$, ~\eqref{eq:acts_communication_eligible_set}
gives
$\mathcal I_{\mathrm{comm}}^{t+1}\subseteq\mathcal I_{\mathrm{comm}}^t$.
An agent selecting $a^{\mathrm{off}}$ or $a^{\mathrm{solo}}$ therefore
withdraws from subsequent communication, although only
$a^{\mathrm{off}}$ suppresses the current-round backbone inference. Hard partner-set construction is nondifferentiable, so
relational attention is supervised only through the soft feature $\xi_i^t$
while topological attention is trained through the mixer; the target branch is
detached, and replay stores raw inputs and topologies rather than cached
hidden states.

\begin{algorithm}[H]
\caption{\texttt{ACTS} Decentralized Execution}
\label{alg:acts_execution}
\footnotesize
\begin{algorithmic}[1]
\setlength{\itemsep}{0pt}
\Require Query $x$; roles $\{\rho_i\}_{i=1}^{N}$; horizon $T$; trained
parameters $\Theta$
\Ensure Team answer $Y$
\State $G^1\!\leftarrow$ complete digraph on $\mathcal I$ without self-loops;
$\mathcal I_{\mathrm{comm}}^1\!\leftarrow\!\mathcal I$;
$h_i^0\!\leftarrow\!\mathbf 0$
\For{$t=1,\ldots,T$}
  \State Build $o_i^t$ from $x$, agent $i$'s history, and the messages
  delivered under $G^t$
  \State For $i\!\in\!\mathcal I_{\mathrm{comm}}^t$: $h_i^t$, $\xi_i^t$,
  $\{\alpha_{ij}^t\}$, then $a_i^t\!\leftarrow\!
  \operatorname*{arg\,max}_{a\in\mathcal A_{\mathrm{comm}}}Q_i(\xi_i^t,a)$
  \Comment{Equation~\ref{eq:acts_state_encoding}--\eqref{eq:acts_cross_attention_selection},~\eqref{eq:acts_igm}}
  \State $a_i^t\!\leftarrow\!a^{\mathrm{off}}$ for
  $i\!\notin\!\mathcal I_{\mathrm{comm}}^t$; form $\boldsymbol a^t$ and
  $\boldsymbol\alpha^t$
  \State $\mathcal I_{\mathrm{comm}}^{t+1}\!\leftarrow\!
  \{i\mid a_i^t\!\notin\!\{a^{\mathrm{off}},a^{\mathrm{solo}}\}\}$;
  $G^{t+1}\!\leftarrow\!\textsc{BuildTopology}(\boldsymbol a^t,
  \boldsymbol\alpha^t)$
  \Comment{Equation~\ref{eq:acts_communication_eligible_set},~\eqref{eq:acts_induced_topology}}
  \State Run round $t$ ($a^{\mathrm{off}}$: no call; $a^{\mathrm{solo}}$: own
  answer; else message along $G^{t+1}$) and append the outputs to the local
  histories
  \State \textbf{if} $|\mathcal I_{\mathrm{comm}}^{t+1}|\!<\!\lceil N/2\rceil$
  \textbf{then break}
\EndFor
\State \Return $Y$ from the answer-aggregation node
\end{algorithmic}
\end{algorithm}

\section{Experimental Details}
\label{sec:appendix_experiment}

\subsection{Training and Evaluation Details}
\label{sec:appendix_training}

% All methods share the same team size ($N=3$), task prompts, and communication-round
% budget (three rounds for mathematical tasks, two rounds for coding and reasoning
% tasks). The scheduler is trained with the Adam optimizer at learning rate
% $5\times10^{-4}$, discount factor $\gamma=0.99$, replay buffer capacity
% $5{,}000$, batch size $8$, and $\epsilon$-greedy exploration that decays
% linearly from $1.0$ to $0.05$ over $50$ episodes. Each domain provides
% $15$ training examples. Results are averaged over three independent runs
% with different random seeds.
\paragraph{Benchmarks and backbone models.}
We evaluate \texttt{ACTS} on nine benchmarks across four domains. The coding
benchmarks are LiveCodeBench v6 and HumanEval
\citep{jain2024livecodebench,chen2021humaneval}. The reasoning benchmarks
are MMLU-Pro and the first 250 questions of Humanity's Last Exam
\citep{wang2024mmlupro,hle2025}. HotpotQA is used for multi-hop question
answering \citep{yang2018hotpotqa}. The mathematical benchmarks are AIME 2025, AIME 2026, HMMT 2025,
and Beyond-AIME
\citep{maa2025aime,maa2026aime,matharena2025hmmt,bytedance2025beyondaime}.
Each AIME benchmark comprises 30 problems from the corresponding
AIME I and AIME II examinations.
Experiments use GPT-OSS-120B and DeepSeek-V4-Flash as frozen
backbone LLMs.
Experiments use GPT-OSS-120B and DeepSeek-V4-Flash as frozen backbone LLMs.
Each task is instantiated with three role-conditioned agents followed by a
final answer-aggregation node.

\paragraph{Baselines and controlled comparison.}
We compare \texttt{ACTS} with direct prompting, LLM-Debate
\citep{llm_debate}, AutoGen \citep{autogen}, LangGraph
\citep{langchain2024langgraph},  Microsoft Agent Framework \citep{microsoft2025agentframework}, TopoDIM \citep{topodim}, and Agent Q-Mix
\citep{agent_qmix}. All methods use the same team size
($N=3$), task prompts, communication-round budget, and answer-evaluation
procedure.

\paragraph{Training protocol and evaluation.}
The ACTS scheduler is trained for 50 episodes using 15 training examples
per domain. The communication budget is three rounds for mathematical tasks
and two rounds for coding and reasoning tasks. All trainable modules are
optimized with Adam using learning rate $5\times10^{-4}$ and discount factor
$\gamma=0.99$. The replay buffer has capacity $5{,}000$, the minibatch size
is $8$, and $\epsilon$-greedy exploration is annealed linearly from $1.0$ to
$0.05$ over 50 episodes. The learned parameters are fixed during evaluation.
Reported results are averaged over three independent runs with different
random seeds.

\subsubsection{Data Sampling and Information Gain Computation}
\label{sec:ig_and_dataset}

For each model $M$, benchmark $D$, and topology family $C$, empirical
communication gain is defined as
$\Delta(M,D,C)=
\operatorname{mean}_{m\in C}\operatorname{ACC}(m,M,D)
-\operatorname{ACC}(\mathrm{Direct},M,D)$,
with accuracy expressed as a fraction.
The threshold $x=2/n$ is a heuristic tolerance corresponding to two
correct answers among $n$ evaluation samples.
A positive gain is identified when at least one topology family
satisfies $\Delta(M,D,C)>x$.
Otherwise, no positive gain is identified under this criterion.
This descriptive threshold does not establish statistical significance.

\begin{table}[t]
\centering
\caption{Per-benchmark evaluation samples, IG thresholds, and classifications under GPT-OSS-120B and DeepSeek-V4-Flash.}
\label{tab:ig_dataset}

\footnotesize
\setlength{\tabcolsep}{5pt}
\renewcommand{\arraystretch}{1.12}

\begin{tabular}{l l c c c c}
\toprule
Benchmark & Domain & $n$ & $x = 2/n$ & IG (GPT-OSS-120B) & IG (DeepSeek-V4-Flash) \\
\midrule
LiveCodeBench & Code & 175 & 0.0114 & $>0$ & $>0$ \\
HumanEval & Code & 164 & 0.0122 & $\approx 0$ & $\approx 0$ \\
HLE & Reasoning & 250 & 0.0080 & $\approx 0$ & $\approx 0$ \\
HotpotQA & Multi-hop QA & 740 & 0.0027 & $>0$ & $>0$ \\
MMLU-Pro & Reasoning & 70 & 0.0286 & $>0$ & $>0$ \\
AIME 2025 & Math & 30 & 0.0667 & $>0$ & $>0$ \\
AIME 2026 & Math & 30 & 0.0667 & $>0$ & $>0$ \\
Beyond-AIME & Math & 100 & 0.0200 & $>0$ & $>0$ \\
HMMT 2025 & Math & 30 & 0.0667 & $>0$ & $\approx 0$ \\
\bottomrule
\end{tabular}

\end{table}

\subsection{States and Observations}
\label{sec:appendix_states}

\texttt{ACTS} follows a centralised-training, decentralised-execution scheme. A
shared per-agent value network receives a local observation $o_i$ for each
agent $i$, while a mixing network integrates the agent-specific values
into a joint estimate $Q_{\mathrm{tot}}$ conditioned on a global state
$s$. We describe the construction of each.

\paragraph{Local observation.}
For $N=3$, each agent receives a 51-dimensional observation $o_i^t$
comprising a 32-dimensional task embedding, a 16-dimensional one-hot
agent identifier, a binary indicator of whether the agent has produced
an answer, its incident edge count, and the normalized cumulative
team token consumption clipped to $[0,1]$.

\paragraph{Global state.}
The global state $s^t$ concatenates all agents' local observations,
three topology statistics comprising the edge count, graph density,
and agent count, and the flattened $N\times N$ relational-attention
matrix. For $N=3$, the resulting state dimension is
$3\times51+3+9=165$.

\subsection{Action Space}
\label{sec:appendix_action_space}

At each round every active agent selects one of six discrete actions,
grouped into three roles. Three actions, \emph{off}, \emph{solo}, and
\emph{broadcast}, operate outside the attention matrix: off deactivates
the agent entirely for the round, emitting no edges and issuing no LLM
call, and once a majority of agents are inactive the episode terminates
early; solo produces an independent answer and emits no edges; broadcast
sends the agent's current output to every other agent. Two actions build
sparse directed edges from the learned attention matrix $a_\mathrm{edge}$:
\emph{attn-broadcast} opens an edge $i\!\to\!j$ when the attention that
agent $j$ pays to $i$ exceeds the row mean, and \emph{attn-receive}
requests information only from the neighbours that agent $i$ itself
attends to. The remaining action, \emph{receive}, performs a generic
information fetch. Figure~\ref{fig:action_heatmap} reports the learned
distribution of these six actions across the nine benchmarks.

\begin{figure}[htbp]
\centering
\includegraphics[width=0.86\linewidth]{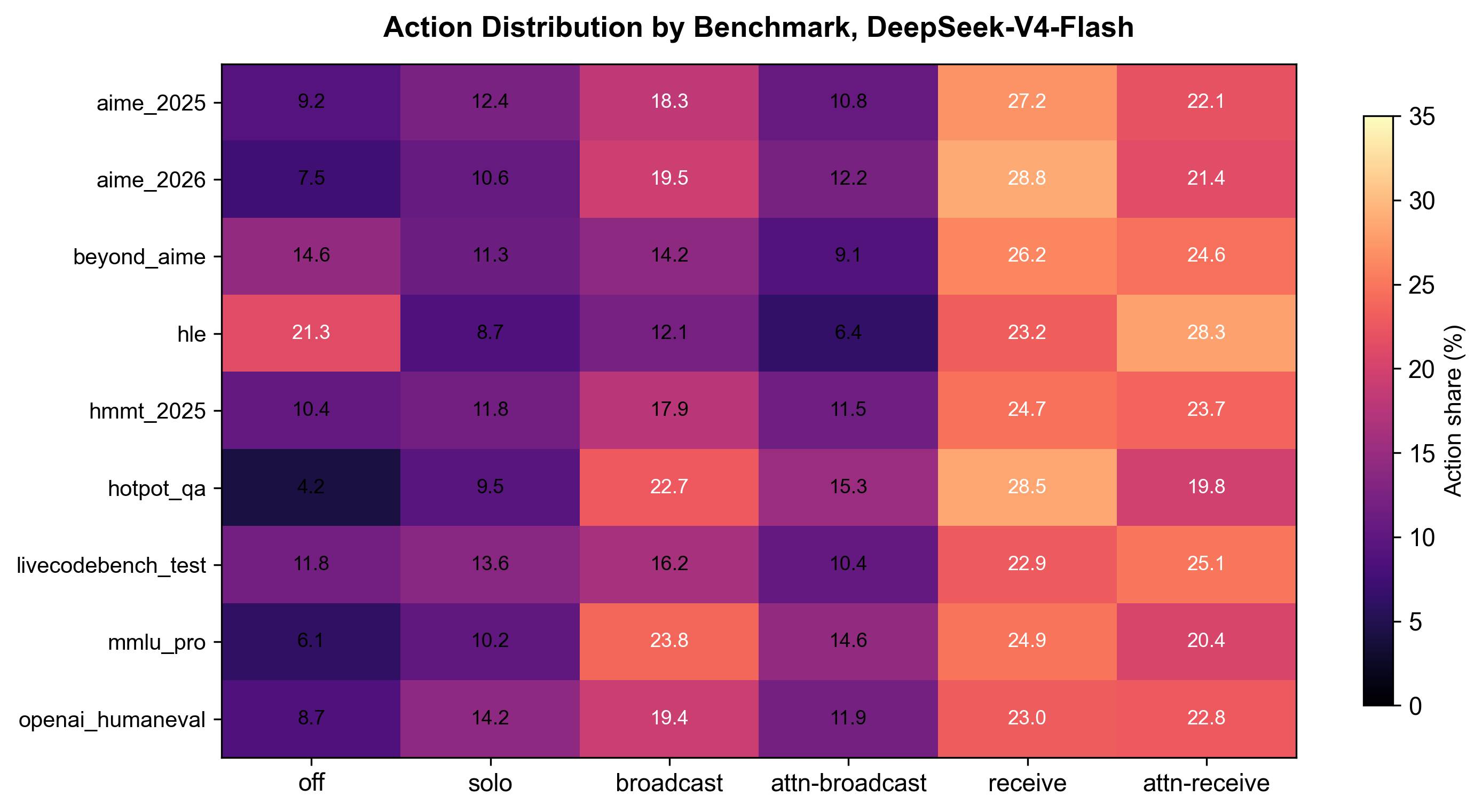}
\caption{Distribution of the six actions across benchmarks under
DeepSeek-V4-Flash. Each cell gives the proportion of rounds in which the
corresponding action is selected. The learned policy favours broadcast
and attn-broadcast on benchmarks where the agents contribute complementary
information, and shifts toward off and attn-receive where a single
perspective suffices.}
\label{fig:action_heatmap}
\end{figure}

\subsection{Case Study}
\label{sec:appendix_case_study}

Figure~\ref{fig:case_study} illustrates the same machinery at the
trajectory level on one HotpotQA question and one HLE question. On
HotpotQA the agents remain active throughout: several agents take the
broadcast action, sending their analysis of the question to the team
while one agent retains the solo action as an independent reference,
after which the broadcasting agents switch to attn-receive, so that the
initially connected topology contracts into a sparse pull while the
informative edges are preserved. On HLE one agent settles the answer
under the solo action, a single attn-broadcast check is routed only to
the agent that requires it, and the remaining agents are successively
deactivated through the off action, with the majority-inactive
condition terminating the episode early. The two trajectories
correspond to the two mechanisms that the aggregate action shares in
Figure~\ref{fig:action_parallel} average over: selective routing when
the agents contribute complementary information, and deactivation when
a single perspective suffices.

\begin{figure*}[htbp]
\centering
\includegraphics[width=0.92\linewidth]{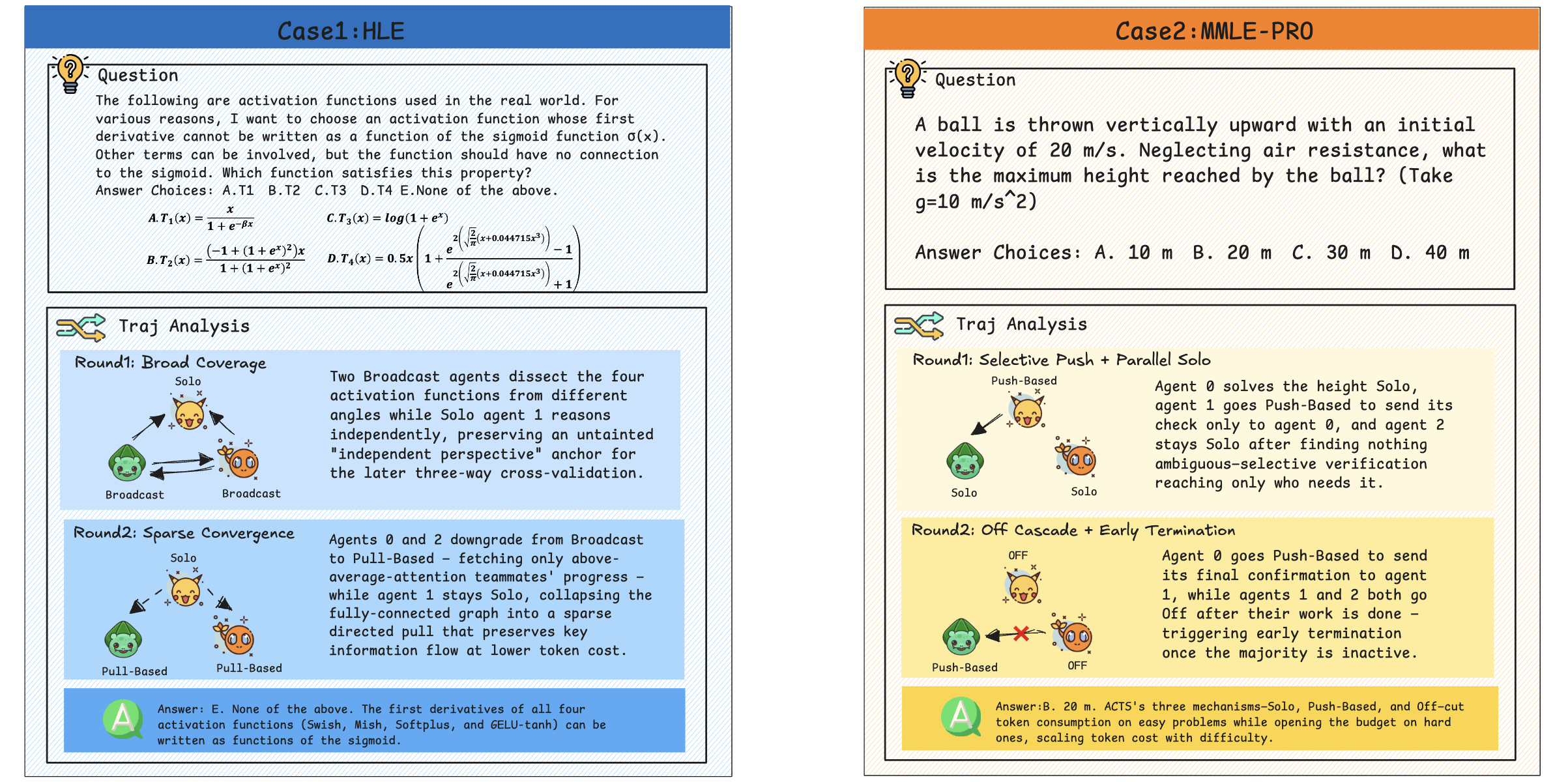}
\caption{Trajectory-level case study on one HotpotQA question and one HLE
question. On HotpotQA the connected topology contracts into a sparse pull
while one agent retains the solo action as an independent reference. On
HLE one agent settles the answer under the solo action, a single
attn-broadcast check is routed to the agent that requires it, and the
remaining agents are deactivated through the off action, triggering
early termination under the majority-inactive condition.}
\label{fig:case_study}
\end{figure*}

\subsection{Per-Sample Token Usage}
\label{sec:appendix_token}

Table~\ref{tab:token_usage} lists the average per-sample token consumption of
each method on HotpotQA, a high-information-gain ($\mathrm{IG}>0$) task, and
HLE, a low-information-gain ($\mathrm{IG}\approx 0$) task, for both backbone
models. These are the numbers plotted against accuracy in
Figure~\ref{fig:pareto}. Single-agent prompting uses the fewest tokens in
every column, but at the cost of substantially lower accuracy; \texttt{ACTS} consumes
the fewest tokens among all multi-agent methods in every column, while the
static topologies are the most expensive because they broadcast the full
running transcript to every agent each round.

\begin{table}[t]
\centering
\caption{Average per-sample token usage ($\times 10^{3}$) on HotpotQA, a
high-information-gain ($\mathrm{IG}>0$) task, and HLE, a
low-information-gain ($\mathrm{IG}\approx 0$) task, for both backbone
models. \texttt{ACTS} (highlighted) consumes the fewest tokens among all
multi-agent methods in every column; only single-agent prompting uses
fewer, at a large accuracy cost.}
\label{tab:token_usage}

\footnotesize
\setlength{\tabcolsep}{6pt}
\renewcommand{\arraystretch}{1.12}

\begin{tabular}{l cccc}
\toprule
& \multicolumn{2}{c}{HotpotQA} & \multicolumn{2}{c}{HLE} \\
\cmidrule(lr){2-3} \cmidrule(lr){4-5}
Method
& GPT-OSS-120B & DeepSeek-V4-Flash & GPT-OSS-120B & DeepSeek-V4-Flash \\
\midrule
Direct Prompting  & 5.8 & 3.6 & 4.5 & 2.8 \\
\midrule
LLM-Debate        & 14.1 & 9.5 & 12.3 & 8.2 \\
AutoGen           & 16.0 & 10.8 & 13.8 & 9.3 \\
LangGraph         & 16.4 & 11.0 & 14.0 & 9.5 \\
Agent Framework   & 16.7 & 11.3 & 14.2 & 9.7 \\
\midrule
G-Designer        & 12.5 & 8.4 & 10.8 & 7.2 \\
TopoDIM           & 11.4 & 7.6 & 9.8 & 6.6 \\
GTD               & 12.1 & 8.1 & 10.5 & 7.0 \\
MaAS              & 11.8 & 7.9 & 10.2 & 6.8 \\
\midrule
Agent Q-Mix       & 8.7 & 5.8 & 7.9 & 5.3 \\
\rowcolor{actsblue}
ACTS              & 7.1 & 4.8 & 6.4 & 4.3 \\
\bottomrule
\end{tabular}

\end{table}

\subsection{Agent Role Assignments and Prompts}
\label{sec:agent-prompts}

Each agent in the multi-agent system is assigned one of six predefined roles. Roles are allocated in round-robin fashion by agent index modulo~6. Since our experiments use $n=3$ agents, only the first three roles are instantiated in any given run:

\begin{enumerate}[label=\textbf{R\arabic*:}, leftmargin=*]
    \item \textbf{Independent Solver} --- solves the problem autonomously without relying on peer input.
    \item \textbf{Team Solver} --- approaches the problem collaboratively, integrating suggestions from teammates.
    \item \textbf{Questioner} --- probes assumptions and raises critical questions about the problem.
\end{enumerate}

The remaining three roles (Refiner, Verifier, Debater) are defined in the role pool for future extensibility to larger agent teams, or for experiments where $n>3$ (e.g., $n=6$). The round-robin allocation modulo~6 naturally activates them when the team size increases, without any change to the prompt construction logic.

\subsubsection{Agent System Prompt}

Every agent receives a system prompt that instantiates its role within an $n$-agent team and instructs it on the expected output format. The general template is:

\begin{quote}
\small
\texttt{You are \{ROLE\} on a \{$n$\}-agent team. \{SUFFIX\}}
\end{quote}

where \texttt{\{ROLE\}} is one of the six role labels listed above and \texttt{\{SUFFIX\}} is a question-type-specific instruction detailed below.

\subsubsection{Question-Type Suffixes}

The suffix appended to each agent's system prompt depends on the \texttt{answer\_type} of the question. Five answer types are supported:

\begin{description}
    \item[\texttt{numerical}.] \textit{``Solve the problem step by step, then state the final numerical answer. End with a line `Final Answer: \texttt{<number>}'.''}

    \item[\texttt{multiple\_choice}.] \textit{``This is a single-choice question; there is exactly one correct option. The options may be labeled with any letter from A up to Z (not just A--E), so read all options carefully before choosing. Briefly analyze why each candidate is right or wrong, then commit to exactly one option. End with a line `Final Answer: \texttt{<letter>}' (e.g., `Final Answer: G').''}

    \item[\texttt{text}.] \textit{``Answer concisely. If a number, give the number. If a name, give the name. End with a line `Final Answer: \texttt{<answer>}'.''}

    \item[\texttt{open\_answer}.] \textit{``Answer concisely in a few words. End with a line `Final Answer: \texttt{<answer>}'.''}

    \item[\texttt{code}.] \textit{``Write the complete, runnable solution code. Output ONLY the code---no markdown fences, no explanation.''}
\end{description}

\subsubsection{User Message}

The user message delivered to each agent is constructed by concatenating the following components in order:

\begin{enumerate}
    \item The question text.
    \item Answer options (only for \texttt{multiple\_choice} questions), formatted as \texttt{A. ...}, \texttt{B. ...}, etc.
    \item Reference passages, when provided by the dataset (e.g., supporting documents for multi-hop QA).
    \item Previously generated answers from inbound neighbors in the communication topology, listed as \texttt{Agent \{i\}: \{answer\}}.
\end{enumerate}

\subsubsection{Final Decision Maker}

After all agents have executed across all rounds, a \textbf{Final Decision Maker} reviews the collected answers and produces the single final output. Its system prompt is:

\begin{quote}
\small
\texttt{You are the Final Decision Maker of a \{$n$\}-agent team. Read the task and all agents' answers carefully, then give the correct single answer.}
\end{quote}

The corresponding user message includes the original question, options (if applicable), reference context, and the full list of agent answers, followed by the instruction \texttt{Give the final answer.}

\subsubsection{No-Role Baseline}

For ablation experiments, we also evaluate a variant where all agents share an identical, role-agnostic system prompt:

\begin{quote}
\small
\texttt{You are a general team member on a \{$n$\}-agent team. \{SUFFIX\}}
\end{quote}

This removes the role specialization signal while keeping all other aspects of the prompt (question-type suffix, user message structure, and communication topology) unchanged.

\end{document}